\def\doi{8(4:4)2012}
\newcommand{\notincluded}[1]{}
\newcommand{\atmost}[2]{\mathit{le}_{#1}~#2}
\newcommand{\atmostnoargs}[1]{\mathit{le}_{#1}}
\newcommand{\set}[1]{\{#1\}}
\newcommand{\colour}{\mathit{color}} 
\newcommand{\nat}{\mathit{nat}}
\newcommand{\at}[2]{#1@#2}
\newcommand{\sub}[2]{#1|_{#2}}
\newcommand{\Reds}[1]{\mathcal{R}_{#1}}
\newcommand{\Blues}[1]{\mathcal{B}_{#1}}
\newcommand{\colist}{\mathit{colist}}
\newcommand{\nil}{\langle\rangle}
\newcommand{\SN}{\mathit{SN}}
\newcommand{\on}[2]{\mathcal{W}_{#1} #2}
\newcommand{\onnoargs}[1]{\mathcal{W}_{#1}}
\newcommand{\pre}[1]{\mu{\mathcal{W}}~#1}
\newcommand{\prenoargs}{\mu{\mathcal{W}}}
\newcommand{\pop}[2]{\mathcal{U}_{#1} #2}
\newcommand{\popnoargs}[1]{\mathcal{U}_{#1}}
\newcommand{\rep}[1]{\nu{\mathcal{U}}~#1}
\newcommand{\repnoargs}{\nu{\mathcal{U}}}
\newcommand{\imp}{\Rightarrow}
\newcommand{\logequ}{\Leftrightarrow}
\newcommand{\wf}[2]{\mathit{acc}_{#1}~#2}
\newcommand{\af}[2]{\mathit{div}_{#1}~#2}
\newcommand{\finite}[1]{#1 \downarrow}
\newcommand{\str}{\mathit{str}}
\newcommand{\ftos}{\mathit{f2s}}
\newcommand{\stof}{\mathit{s2f}}
\newcommand{\hred}{\mathit{red}}
\newcommand{\hblue}{\mathit{blue}}
\newcommand{\F}{\mathcal{F}}
\newcommand{\G}{\mathcal{G}}
\newcommand{\rar}{\rightarrow}
\newcommand{\False}{\mathit{False}}
\newtheorem{lemma}{Lemma}[section]
\newtheorem{corollary}{Corollary}[section]
\begin{document}

\title{On streams that are finitely red}

\author[M.~Bezem]{Marc Bezem\rsuper a}
\address{{\lsuper a}Department of Informatics, University of Bergen}
\email{bezem@ii.uib.no}
\thanks{}
  
\author[K.~Nakata]{Keiko Nakata\rsuper b} 
\address{{\lsuper{b,c}}Institute of Cybernetics at Tallinn University of Technology}
\email{\{keiko, tarmo\}@cs.ioc.ee}
\thanks{}

\author[T.~Uustalu]{Tarmo Uustalu\rsuper c}
\address{\vskip-6 pt}
%\email{tarmo@cs.ioc.ee}
\thanks{}
 
\keywords{Type theory, constructive mathematics, (co)induction, finiteness}
\subjclass{F.4.1}

\begin{abstract}
\noindent 
Mixing induction and coinduction, we study alternative definitions of streams
being finitely red. We organize our definitions into a
hierarchy including also some well-known alternatives in
intuitionistic analysis. The hierarchy collapses classically, but is
intuitionistically of strictly decreasing strength. We characterize
the differences in strength in a precise way by weak instances of the
Law of Excluded Middle.
\end{abstract}

\maketitle

\section{Introduction}

Finiteness is a concept that seems as intuitive as it is
fundamental in all of mathematics. At the same time finiteness is
notoriously difficult to capture axiomatically.
First, due to compactness, finiteness is not first-order
definable. Second, in ZF set theory, there exist several
different \emph{approximations} (as ZF is a first-order theory).
Tarski's treatise~\cite{Tarski24} is still a very readable introduction
to different definitions of finiteness in set theory without
the axioms of infinity and choice. 
These include the definitions by Dedekind 
($S$ is finite if there is no bijection from $S$ 
to a proper subset of $S$), 
Kuratowski
($S$ is finite if $S$ can be obtained from the empty set
by adding elements inductively),
and Tarski
($S$ is finite if each non-empty set of subsets of $S$
contains a minimal element wrt.\ set inclusion).
Some approximations of
finiteness are only equivalent if one assumes additional axioms.
And all this already in the realm of classical mathematics.

It will therefore not come as a surprise that in intuitionistic
mathematics the situation is even more complicated. 
In this paper, we will study several classically equivalent definitions
of bit-valued functions (binary infinite sequences) that are almost always zero, that is,
there are at most finitely many positions where the sequence is one.
From the constructive point of view,
one has at least the following main variants.

\begin{enumerate}[(1)]

\item $\exists n.\, \forall m\geq n.\, f~m = 0$.
\noindent
This definition expresses that all finitely many $m$ for
which $f~m = 1$ occur in $f$ before some position $n$.
By the decidability of $=$, they can all be looked up and
counted. This is clearly the strongest definition giving
all information. By the decidability of $=$, this definition
is also intuitionistically equivalent to
$\exists n.\, \neg\neg\forall m\geq n.\, f~m = 0$,
in spite of the double negation prefixing the universal quantifier.

\item $\exists n.\, \forall m.\, \#\set{k \leq m \mid f~k = 1} < n$.
\noindent
This definition is weaker than the first one. It only
states that there is an upper bound to the number of ones in
the sequence, but does not provide information on where to find them.

\item $\neg(\forall n.\,\neg\neg \exists m\geq n.\, f~m = 1)$.
\noindent
This definition is equivalent to $\neg\neg(1)$. Note that $(3)$ is stable
since it is negative, and therefore does not imply $(2)$.
Surprisingly, $(3)$  is also equivalent to $\neg\neg(2)$.
The reason is that $(1)$ and $(2)$ are classically equivalent,
do not contain disjunction, and have only existential
quantification as the main connective of the formula.
Therefore their respective double negation translations
$\neg\neg(1)$ and $\neg\neg(2)$ are constructively equivalent,
so also equally weak.

\item $\neg(\forall n.\, \exists m\geq n.\, f~m = 1)$.
\noindent
This definition expresses that the set of positions where the
sequence is one is not infinite. It does not give a clue where
to find the ones or how many ones there are.
Definition $(4)$ is the weakest of all: It negates a strong,
positive statement allowing the construction of an infinite
subsequence of ones in $f$.

\end{enumerate}

\noindent The variants are listed in decreasing constructive strength.
Variants $(1)$ and $(2)$ are positive and therefore strictly
stronger than the negative variants $(3)$ and $(4)$.
Reversing the implications above requires some
form of classical logic. 
For instance we know that $(4)\implies(2)$
is not constructively valid. We use the occasion to introduce
an argument employed more rigorously later in this paper.
Let $f$ be an arbitrary bit-valued function. Construct $f'$ starting from $n=0$
by taking $f'~n=0$ as long as $f~n=0$. 
There is no constructive way to find out
whether $f~n$ is always $0$ or not, but if $f~n=1$ for the first time,
we take $f'~k=1$ for $n\leq k \leq 2n$ and $f'~k =0$ for $k>2n$.
One easily verifies $(4)$ for $f'$. Now, if $(2)$ would hold for $f'$
we would be able to decide whether $f$ is constant $0$ or not.
For if there are at most $n$ ones occurring in $f'$, the first one
would occur not later than at $n$, and this can constructively be tested. 
In other words, $(4)\implies(2)$
implies an instance of the excluded middle
which is not constructively valid.

\medskip

The paper sets out an expedition to the concept of finiteness
from the constructive point of view, with strong assumptions
on the set whose finiteness we study. Namely,
\begin{enumerate}[(1)]
\item
The set  is enclosed in another set with
decidable equality. 

\item It is carved out by a decidable predicate (whether a bit-valued
  function returns 1).

\item The enclosing set can be enumerated.

\end{enumerate}
In one word, therefore, we could summarize our setting as ``searchable''.
As we will see in the paper, 
even in a searchable setting, there are at least six different
notions of ``finiteness''. 

\medskip

The remainder of the paper is structured as follows.
In the next section, we set up a basis for our development
in the paper.
Section~\ref{sec:spectrum} introduces 
a spectrum of definitions for
sequences being finitely one.
In Section~\ref{sec:hierarchy}, we study
relative strength of these definitions
from the constructive point of view.
In Section~\ref{sec:finiteness},
we relate our analysis to that of finiteness
of sets in Bishop's set theory due to Coquand 
and Spiwack~\cite{ConstructivelyFinite}.
We conclude in Section~\ref{sec:conclusion}.

For methodological uniformity, we prefer to define 
all datatypes inductively (rules denoted by a single line)
or coinductively (rules denoted by a double line).

\section{Two views of infinite sequences}\label{sec:stream_function}

We may look at binary infinite sequences in
two ways. We may view them as bit-valued \emph{functions on natural
  numbers} or, which will amount to the same, as \emph{streams} of
bits, i.e., as elements of a coinductive type.  Correspondingly, we
will use two different languages to speak about them: arithmetic (as
is traditional in logic) for bit-valued functions and the language of
inductive and coinductive predicates (as is more customary in
functional, in particular, dependently typed, programming) for
bitstreams. As a warming-up, in this section, we connect the two views, 
setting up a basis for our development along the way.

For this paper to have some color, we take a bit to be one of the two
colors, red and blue:
\[
\infer{ R: \colour}{}
\qquad
\infer{ B: \colour}{}
\]
In the function-view, an infinite sequence is therefore a function $f
: \nat \to \colour$ mapping natural numbers (positions in the sequence)
to colors.  Our intended notion of equality of these functions is the
\emph{extensional function equality} defined by
\[
\infer{f \equiv f'}{\forall n.\, f~n = f'~n}
\]
In the stream-view, an infinite sequence is a stream $s : \str$ where
the stream type is defined coinductively by the following rule:
\[
\infer={ c~s:\str}{ c: \colour & s:\str}
\]
Two streams $s$ and $s'$ are equal for us, if they are
\emph{bisimilar}. This notion of equality is defined coinductively by
the rule
\[
\infer={c~s \sim c~s'}{s \sim s'}
\]
The two types are isomorphic. Indeed we can define two functions $\stof :
\str \to \nat \to \colour$ and $\ftos : (\nat \to \colour) \to \str$
mediating between the two types. The function $\stof$ is defined
by (structural) recursion by 
\begin{eqnarray*}
\stof\, (c\ s)\, 0 & = & c \\
\stof\, (c\ s)\, (n+1) & = & \stof\, s\, n 
\end{eqnarray*}
while the function $\ftos$ is defined by (guarded-by-constructors)
corecursion by
\begin{eqnarray*}
\ftos\, f = (f\, 0)~(\ftos\, (\lambda n.\, f\,  (n+1)))
\end{eqnarray*}
We have that $\forall f, s.\ f \equiv \stof s \Leftrightarrow \ftos\, f
\sim s$. The $\imp$ direction is proved by coinduction, the 
$\Leftarrow$ direction by induction.
From this fact it follows immediately that
$\forall f.\, f \equiv \stof\, (\ftos\, f)$ and $\forall s.\, \ftos\,
(\stof\, s) \sim s$, i.e., that the function and stream types are
isomorphic, as well as that $\forall f, f'.\, f \equiv f' \Rightarrow
\ftos\, f \sim \ftos\, f'$ and $\forall s, s'.\, s \sim s'
\Rightarrow \ftos\, s \equiv \ftos\, s'$, i.e., that the conversion
functions appropriately preserve equality. (In general, we have to
ensure that all functions and predicates we define on bit-valued
functions and bitstreams respect our notions of equality for them,
i.e., extensional function equality and bisimilarity.\footnote{The
  correspondence between extensional function equality and
  bisimilarity shows that bisimilarity is the one and only reasonable
  notion of ``extensional stream equality''.})

Properties of binary infinite sequences can now be defined and
analyzed in either one of the two equivalent views. 
For the stream-view, it is convenient to introduce some 
operations  and predicates
as primitives in our language for streams.
We define
\[\begin{array}{c}
\at{(c~s)}{0} = c
\qquad
\at{(c~s)}{(n+1)} = \at{s}{n}
\\
\sub{s}{0} = s
\qquad
\sub{(c~s)}{(n+1)} = \sub{s}{n}
\end{array}\]
so that $\at{s}{n}$ denotes the color at the position $n$ in $s$
and $\sub{s}{n}$ denotes the suffix of $s$ at $n$. 
We also define 
\[
\infer{\hred\, (R~s)}{}
\quad
\infer{\hblue\, (B~s)}{}
\]
\[
\infer{\F\, X\, s}{
  X\, s
}
\quad
\infer{\F\, X\, (c~s)}{
  \F\, X\, s
}
\hspace*{2cm}
\infer={\G\, X\, (c~s)}{
  X\, (c~s) & \G\, X\, s
}
\]
Here, $\F$ and $\G$ are the ``sometime in the future'' (``finally'')
and ``always in the future'' (``globally'') modalities of linear-time
temporal logic. They are stream predicates parameterized over stream
predicates.\footnote{There is no need to see them as ``first-class''
  predicate transformers, as there is no real impredicativity
  involved: the argument $X$ in the definition of $\F$ and of $\G$ is constant.}
Induction
and coinduction give us simple proofs of basic facts such as
the equivalence 
\[
\forall s.\quad \G\, (\lambda t.\, \neg X\,  t)\, s \quad \Leftrightarrow \quad \neg \F\, X\, s
\]
and the implication (converse does not hold)
\[
\forall s.\quad \F\, (\lambda t.\, \neg X\,  t)\, s \quad \Rightarrow \quad \neg \G\, X\, s
\]
Importantly, we can also prove that 
\[
\begin{array}{llcr}
\forall s.\, & \F\, X\, s & \Leftrightarrow & \exists n.\, X\, (s|_n) \\
\forall s.\, & \G\, X\, s & \Leftrightarrow & \forall n.\, X\, (s|_n) 
\end{array}
\]
noticing that 
$\forall s, n.\, \stof\, (s|_n) = \lambda m.\, \stof\, s\, (n + m)$.

Both modalities are expressible in the function-view, but
the definitions are (perhaps) less elegant, as they involve explicit
arithmetical manipulation of positions:
\[
\begin{array}{llcr}
\forall f.\, & \F\, (\lambda s.\, Y (\stof\, s))\, (\ftos\, f) & \Leftrightarrow & \exists n.\, Y\, (\lambda m.\ f\, (n+m)) \\
\forall f.\, & \G\, (\lambda s.\, Y (\stof\, s))\, (\ftos\, f) & \Leftrightarrow & \forall n.\, Y\, (\lambda m.\ f\, (n+m))
\end{array}
\]
In particular,
\[
\begin{array}{llcr}
\forall s.\, & \F\, \hred\, s & \Leftrightarrow & \exists n.\, \stof\, s\, n = R \\
\forall s.\, & \G\, \hblue\, s & \Leftrightarrow & \forall n.\, \stof\, s\, n = B 
\end{array}
\]
Accordingly, we have
\[
\forall s.\quad 
(\neg \G\, \hblue\, s \Rightarrow \F\, \hred\, s)
\quad \Leftrightarrow \quad
(\neg (\forall n.\, \stof\, s\, n = B) \Rightarrow \exists n.\, \stof\, s\, n = R)
\]
and hence
\[
\left[
\forall s.\, 
\neg \G\, \hblue\, s \Rightarrow \F\, \hred\, s
\right]
\quad \Leftrightarrow \quad
\left[
\forall f.\,
\neg (\forall n.\, f\, n = B) \Rightarrow \exists n.\, f\, n = R)
\right]
\]
We now have arrived at two equivalent formulations
of Markov's Principle (MP).
Markov's Principle is an important principle that is
neither valid nor inconsistent constructively, but only
classically valid. It  is computationally meaningful, 
however, being realizable by search.

In the function-view (the right-hand side), 
which is how it is traditionally presented,
Markov's Principle is the statement that
\[
\forall f.\quad \neg (\forall n.\, f\, n = B) \quad \Rightarrow \quad \exists n.\, f\, n = R 
\]
(or, equivalently, as $\forall n.\, \neg A \Leftrightarrow \neg
\exists n.\, A$, the statement $\forall f.\, \neg \neg (\exists
n.\, f\, n = R) \Rightarrow \exists n.\, f\, n = R$.)  

The computational interpretation is the natural one: if it cannot be
that all positions in a given infinite sequence are blue, then we
find a red position by exhaustively checking all positions in the
natural order $0,1,2\ldots$. (Cf.\ computability theory: this is
minimization, not primitive recursion.)

In the stream-view (the left-hand side), Markov's Principle is
\[
\forall s.\quad
\neg \G\, \hblue\, s \Rightarrow \F\, \hred\, s
\]
stating that if a stream $s$ is not all blue,
then it is eventually red. 
But, in a certain sense, it is
more than just any equivalent statement to the function-view 
counterpart. It is a concise
formulation of Markov's Principle based on the stream view of infinite
sequences and canonical inductive and coinductive predicates on
streams. We would therefore like to think that, for computer
scientists, it should be natural to take namely this statement rather
than the traditional arithmetical version as the definition of
Markov's Principle.

This applies to another important classical axiom of
the Lesser Principle of Omniscience which is
meaningful as a special case of the Law of Excluded Middle.

The Lesser Principle of Omniscience (LPO) is the assertion of the
statement
\[
\forall f.\quad (\forall n.\, f\, n = B) \vee (\exists n.\, f\, n = R)
\]
that, in the light of what we already learned, is equivalent
to
\[
\forall s.\quad 
\G\, \hblue\, s \vee \F\, \hred\, s
\]
Again, the latter statement is perhaps more basic for a computer
scientist than the former: it states that any stream is
either all blue or eventually red  (which is 
constructively impossible). 

As we have constructively $A \vee B \imp (\neg A \imp B)$, 
LPO implies Markov. LPO is not computationally justified,
and therefore strictly stronger than MP.

\section{Some notions of  ``finitely red''}\label{sec:spectrum}

With these preparations done, we can now proceed to possible
mathematizations of the informal property of a given infinite sequence
(function $f$ or stream $s$) being ``finitely red''.
We consider six variations.
They are all equivalent classically. 
In Section~\ref{sec:hierarchy},
we will study their relative strength 
from the constructive point of view.

\subsection{Eventually all blue}
The simplest mathematization is: ``from some position on, the sequence
is all blue''.

In the function view, this is stated as 
\[
\exists n.\, \forall m \geq n. f\, m = B
\]
while the stream-view statement is at least as simple, namely,
the stream is ``finally'' ``globally'' blue: 
\[
\F\, (\G\, \hblue)\, s
\]
The two statements are equivalent. 
\[
\forall s.\, F\, (\G\, \hblue)\, s \Leftrightarrow 
 \exists n.\, \forall m \geq n.\,  \stof\, s\, m = B
\]

\subsection{Boundedly red}

This is: ``the number of red positions in the sequence is bounded''.

In the function view, this is stated as 
\[
\exists n.\, \forall m.\, \#\set{k \leq m \mid f\,k = R} < n
\]
so that for a fixed $n$, $f$ is red fewer than $n$ up to 
the $m$-th position for any $m$.

The formation of the stream view is similar. We first define
a binary predicate $\atmost{n}{s}$, which states that
$s$ is fewer than $n$ red, coinductively by
\[
\infer={ \atmost{n+1}{(B~s)}}{ \atmost{n+1}{s}}
\qquad
\infer={ \atmost{n+1}{(R~s)}}{ \atmost{n}{s}}
\]
Note that there are no clauses for $\atmostnoargs{0}$,
reflecting the fact that this unary predicate is everywhere false.
Then the stream-view is simply: 
\[
\exists n.\, \atmost{n}{s}
\]
Again, the two statements are equivalent
\[
\forall s, n.\quad \atmost{n}{s} \logequ 
\forall m.\, \#\set{k \leq m \mid \stof\, s\,k = R} < n
\]

\subsection{Almost always blue}

The third definition amounts to 
the least fixed point of a weak until operator in
linear-time temporal logic. It is also found in
the thesis of C. Raffalli~\cite{Raf94}. 
We formulate it in the stream view. 
The weak until operator, $\onnoargs{X}$, 
is parameterized over any predicate $X$ on streams
and defined coinductively by
\[
\infer={ \on{X}\, (B~s)}{ \on{X}\, s}
\qquad
\infer={ \on{X}\, (R~s)}{ X\, s}
\]
so that $\on{X}\, s$ holds if, whenever the first occurrence of red in $s$
is encountered, $X$ holds on the suffix after the occurrence. 
Classically it is equivalent to that $s$ is either all blue
or it is eventually red and $X$ holds on 
the suffix after the first occurrence of red (which 
is guaranteed to exist as $s$ is eventually red).
Our definition of $\onnoargs{X}$ avoids 
upfront decisions of LPO, i.e., whether a stream is all blue or eventually
red.

We then take the least fixed point of $\onnoargs{X}$.
Define $\prenoargs$ inductively in terms of $\onnoargs{X}$ 
by the (Park-style) rule: 
\[
\infer{ \pre{s}}{ \on{\prenoargs}\, s}
\]
As $\onnoargs{X}$ is monotone on $X$, the above definition 
makes sense. For the purpose of proof, 
in particular to avoid explicitly invoking monotonicity 
of the underlying predicate transformer $\on{X}$,
it is however convenient to use the Mendler-style rule
\[
\infer{ \pre{s}}{ \forall s.\, X~s \imp \pre{s} & \on{X}\, s}
\]
The Park-style rule is derivable from the Mendler-style rule. 
As $\onnoargs{X}$ is monotone on $X$, we can also recover the
natural inversion principle for $\prenoargs$.

\smallskip

The statement $\pre{s}$
does not give a clue as to where
to find the red positions in $s$ or how many of them there are.
Nonetheless it refutes that the stream is infinitely often red
(to be formulated below).
Therefore $\pre{s}$ expresses that $s$ is almost always blue,
and in the remainder of the paper we phrase $\prenoargs$ 
as almost always blue. 

The function view corresponding to $\prenoargs$ could be given
by the second-order encoding of induction and coinduction,
which is inevitably more verbose and therefore omitted. 
Instead, in the following subsections, we will take a closer look at
$\onnoargs{X}$ and $\prenoargs$, giving alternative characterizations
of streams that are almost always blue. 

\subsection{Streamless red positions}
\label{sec:streamless}

The fourth definition is inspired by~\cite{ConstructivelyFinite}.
It states that the set of red positions
in the sequence is \emph{streamless}.
A set $A$ is streamless if every stream over $A$ has a duplicate.
As equality on $A$ is decidable for us, 
this is equivalent to saying that
a set $A$ is streamless if any duplicate-free colist over $A$ is
finite. 

For any set $A$, we define duplicate-free colists over $A$ 
coinductively by
\[
\infer={ \nil : \colist\, A}{
}
\quad
\infer={ x~\ell : \colist\, A}{
  x : A 
  &
  \ell : \colist\, (A \setminus\{x\})
}
\]
We define finiteness of colists inductively by
\[
\infer{\finite{\nil}}{
}
\quad
\infer{\finite{x~\ell}}{
  \finite{\ell}
}
\] 
For any sequence, namely function $f$ or stream $s$, 
let $\Reds{f}$ (resp. $\Reds{s}$) denote the set of red 
positions in $f$ (resp. $s$). 
Formally, $n \in \Reds{f}$ (resp. $n \in \Reds{s}$)
if $f\, n = R$ (resp. $\at{s}{n} = R$).

Then, the fourth definition of streams being finitely red is 
stated in the stream view as
\[
\forall \ell : \colist\, \Reds{s}.\, \finite{\ell}
\]
or, trivially equivalently in the function view, as
\[
\forall \ell : \colist\, \Reds{f}.\, \finite{\ell}
\]

Finite subsets of a given set $S$ can be characterized using $\nat$
as for every injection $i : \nat\rar S$ and every finite subset $A\subseteq S$
there exists $n : \nat$ with $i\,n \notin A$.
This naturally leads to a positive formulation of co-finiteness:
$A\subseteq S$ is co-finite if for every injection $i : \nat\rar S$
there exists $n : \nat$ with $i\,n \in A$. In \cite{VeldmanBezem93}, Wim Veldman coined
the qualifier \emph{almost full} for a subset $A$ of $\nat$ such that
for every strictly increasing $i : \nat \rar \nat$ one has
$i\,n \in A$ for some $n : \nat$.
Let $\Blues{s}$  denote the set of blue
positions in $s$. It turns out that the notions
\emph{almost full} and \emph{streamless} are related in the 
following precise sense.

\begin{lemma}\label{lemma:almostfull_iff_streamless}
$\forall s.\quad  \Blues{s} \text{~ is almost full} \quad \logequ \quad \forall \ell : \colist\, \Reds{s}.\, \finite{\ell}$.
\end{lemma}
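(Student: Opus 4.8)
The plan is to prove the two implications separately. I read ``$\Blues{s}$ is almost full'' as: for every strictly increasing $i : \nat \rar \nat$ there is some $n$ with $\at{s}{(i\,n)} = B$, and ``$\forall \ell : \colist\, \Reds{s}.\, \finite{\ell}$'' as streamlessness of the red positions. The right-to-left implication (streamless $\imp$ almost full) is the routine one; the left-to-right implication (almost full $\imp$ streamless) is where the real work lies, because it has to turn the essentially negative, quantifier-over-sequences hypothesis ``almost full'' into an actual \emph{inductive} finiteness proof, and because the elements of a colist over $\Reds{s}$ arrive in arbitrary order rather than increasingly.

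For the direction ``streamless $\imp$ almost full'', I fix a strictly increasing $i$ and first build a single duplicate-free colist $\ell_0 : \colist\, \Reds{s}$ by corecursion on the index $k$, setting $\ell_k = \nil$ when $\at{s}{(i\,k)} = B$ and $\ell_k = (i\,k)~\ell_{k+1}$ when $\at{s}{(i\,k)} = R$; the case split is legitimate since color is decidable. Strict monotonicity of $i$ makes the listed positions pairwise distinct, and each consed position is red, so $\ell_0$ is a genuine duplicate-free colist over $\Reds{s}$, whence streamlessness gives $\finite{\ell_0}$. Then I prove $\forall k.\, \finite{\ell_k} \imp \exists n.\, \at{s}{(i\,n)} = B$ by induction on the derivation of finiteness: in the base case $\ell_k = \nil$ the construction forces $\at{s}{(i\,k)} = B$, so $n = k$ works; in the step case $\ell_k = (i\,k)~\ell_{k+1}$ the induction hypothesis at $k+1$ delivers the witness. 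Instantiating at $k = 0$ yields $i\,n \in \Blues{s}$ for some $n$.

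For the harder direction ``almost full $\imp$ streamless'', I fix $\ell : \colist\, \Reds{s}$ and aim at $\finite{\ell}$. All elements of $\ell$ are red and, by the colist type, pairwise distinct. From $\ell$ I extract a \emph{total}, strictly increasing sequence $i : \nat \rar \nat$ together with a record of how each value was produced: maintaining a current lower bound $b$ and a current suffix of $\ell$, I scan that suffix for the first element exceeding $b$. Here the searchable setting is essential, since at most $b+1$ distinct naturals are $\leq b$, so within $b+2$ scanning steps the search either finds a (red) element $> b$, which becomes the next value of $i$ with $b$ updated to it, or reaches $\nil$. In the latter case the suffix is exhausted, which is a derivation of $\finite$ of that suffix; from then on I let $i$ continue with an arbitrary strictly increasing tail so that $i$ stays total and strictly increasing. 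Each value of $i$ is thus computed by a terminating bounded search, so $i$ is well defined.

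Now I apply the hypothesis to this $i$, obtaining an index $n_0$ with $\at{s}{(i\,n_0)} = B$. Re-running the finite computation that produces $i\,0, \dots, i\,n_0$, I inspect whether a $\nil$ was encountered: if not, then all of $i\,0, \dots, i\,n_0$ are red elements taken from $\ell$, contradicting $\at{s}{(i\,n_0)} = B$; hence a $\nil$ must have been reached, and the corresponding derivation of $\finite$ of a suffix of $\ell$ propagates back through the cons steps to give $\finite{\ell}$. I expect the main obstacle to be exactly this step: making the extraction of the strictly increasing sequence fully constructive despite the arbitrary ordering of the colist (handled by the running-bound bounded search enabled by distinctness and decidability) and, crucially, converting the almost-full hypothesis---which only asserts the existence of a blue value along $i$---into a concrete, structurally valid proof of $\finite{\ell}$ rather than merely its double negation.
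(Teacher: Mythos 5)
Your proof is correct and takes essentially the same approach as the paper's: the easy direction builds the same colist ($\nil$ on a blue value of $i$, cons on a red one) and inducts on its finiteness to extract the blue witness, and the hard direction uses the same running-lower-bound bounded search through the duplicate-free colist, padded with an arbitrary strictly increasing tail after $\nil$, with the blue witness forcing the search to have hit $\nil$. The only cosmetic difference is that the paper factors the extraction of the increasing sequence into a filtering map $f$ on colists followed by a conversion $g$ to a function, whereas you fuse the two into one construction.
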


\begin{proof}
  ($\imp$): Let $s$ be such that $\Blues{s}$ is almost full and let 
$\ell : \colist\, \Reds{s}$. We have to prove $\finite{\ell}$.
Define a function $f: \nat \rar \colist\, \Reds{s} \rar \colist\, \Reds{s}$
by corecursion by
\[
\begin{array}{ll} 
f\, n\, \nil = \nil\\
f\, n\, (m~\ell) = m~(f\, (m+1)\, \ell) & \mbox{if}~n \leq m\\
f\, n\, (m~\ell) = f\, n\, \ell       & \mbox{if}~n > m\\
\end{array}\]
$f$ is well-defined as the last clause is successively applicable 
only finitely many times. Moreover, $f\, n\, \ell$ is finite precisely
when $\ell$ is.
Clearly, $f\, 0\, \ell$ is increasing, and can be turned into an increasing function
$i = g\, 0\, (f\, 0\, \ell) : \nat\rar\nat$ by defining 
$g: \nat \rar \colist\, \Reds{s}\rar \nat \rar \nat$ recursively by
\[
g\, m\, \nil \, n = n+m
\qquad
g\, m\, (x~\ell) \, 0 = x
\qquad
g\, m\, (x~\ell) \, (n+1) = g\,(x+1)\, \ell\, n
\]
One proves by induction on $n$ that 
$\forall n,m: nat.\, \forall \ell : \colist\, \Reds{s}.\, g\,m\,\ell\,n \in \Blues{s} \imp \finite{\ell}$.
Since $\Blues{s}$ is almost full there exist $n:\nat$ such that $i\,n = g\, 0\, (f\, 0\, \ell)\,n  \in\Blues{s}$,
so $\finite{f\, 0\, \ell}$, so $\finite{\ell}$ as required.

  ($\Leftarrow$): Let $s$ be such that $\forall \ell : \colist\, \Reds{s}.\, \finite{\ell}$.
We have to prove that $\Blues{s}$ is almost full. Let $i : \nat\rar\nat$ be increasing. We define
corecursively $h\,i : \nat \rar \colist\, \Reds{s}$ by:
\[
\begin{array}{ll}
h\, i\, n = \nil & \mbox{if}~ i\,n \in\Blues{s} \\
h\, i\, n = (i\,n)~(h\, i\, (n+1)) & \mbox{if}~i\,n \in\Reds{s}
\end{array}
\]
One proves by induction on $\finite{h\,i\,n}$ that 
$\forall n : nat.\, {\finite{h\,i\,n}} \imp \exists k : nat.\, i\,k \in \Blues{s}$.
Indeed we have $h\,i\,0 : \colist\, \Reds{s}$, so $\finite{h\,i\,0}$. 
It follows that $\Blues{s}$ is almost full.
\end{proof}

\subsection{Not not eventually all blue}

In this paper, we are mainly interested in positive variations.
However, two negative variations appear 
natural to consider for us.
One of them is the double negation of the first definition
of eventually all blue. 

Our fifth definition is stated in the function view as,
\[
\neg\neg \exists n.\, \forall m \geq n. f\, m = B
\]
or in the stream view as
\[
\neg\neg \F\, (\G\, \hblue)\, s
\]
which is equivalent to 
\[
\neg \G\, (\neg \G\, \hblue)\, s
\]
The last formulation, $\neg \G\, (\neg \G\, \hblue)\, s$,
turns out handy in proofs and we will use either of them
interchangeably.

\subsection{Not infinitely often red}
\label{sec:notinfinitelyoften}

The last definition of streams being finitely red
is given by streams \emph{not} being infinitely often red.
So we first look at definitions of 
streams being infinitely often red,
which admit less variety of definitions. 

A well-known definition is given by streams that are 
``globally'' ``finally'' red, or
\[
\G\, (\F\, \hred)\, s 
\]
This definition is dual to that of eventually-all-blue streams,
i.e., $\F\, (\G\, \hblue)\, s$. 
The modalities
$\G$ and $\F$ are flipped, so are the colors $\hred$ and $\hblue$.
The function view of this is stated as
\[
\forall n.\, \exists m \geq n.\, f\, m = R
\]

The function and stream views are equivalent
\[
\forall s. \quad \G\, (\F\, \hred)\, s  
\quad \logequ \quad
\forall n.\, \exists m \geq n.\, \stof\, s\, m = R
\]

Similarly, we obtain a definition of streams being infinitely
often red, by dualizing the definitions of $\onnoargs{X}$ and $\prenoargs$,
yielding
\[
\infer{ \pop{X}{(B~s)}}{ \pop{X}\, s}
\qquad
\infer{ \pop{X}{(R~s)}}{ X~s}
\qquad\qquad
\infer={ \rep{s}}{ \pop{\repnoargs}\, s}
\]
The (strong) until operator $\popnoargs{X}$
is dual to the weak until operator $\onnoargs{X}$: 
The statement $\pop{X}\, s$
says that the suffix of $s$ after the first occurrence of red
must satisfy $X$ and the occurrence must exist. 
Then $\repnoargs$ takes the \emph{greatest} fixed point of $\popnoargs{X}$,
whereas $\prenoargs$ is the 
least fixed point of $\onnoargs{X}$.

Interestingly, $\repnoargs$ is equivalent to $\G\, (\F\, \hred)$
\[
\forall s. \quad \rep{s} \quad \logequ \quad \G\, (\F\, \hred)\, s
\]
As we will see in Section~\ref{sec:hierarchy},
$\prenoargs$ and $\F\, (\G\, \hblue)$
are \emph{not} equivalent constructively. 
(Collapsing the two amounts to LPO.)

We conclude this section with the weakest definition
in our spectrum of streams being finitely red. Namely,
\[
\neg \G\, (\F\, \hred)\, s
\]
or in its equivalent function view 
\[
\neg (\forall n.\, \exists m \geq n.\, f\, m = R)
\]

\subsection{Accessibility}\label{sec:acc}

In this section, we characterize 
streams that are almost always
blue in terms of accessibility of (decidable) relations on natural
numbers induced by streams. 

We define accessibility of a binary relation 
$\succ$ on a set $U$ by 
\[
\infer{\wf{\succ}{n}}{
  \forall m.\, n \succ\, m \Rightarrow \wf{\succ}{m}
}
\]

For any stream $s$, we define a decidable relation $\succ_s$ on
natural numbers by taking $n \succ_s m$ to mean that $m$ is the
position following the first red position from $n$ onward (including
$n$). Formally,
\[
\infer{n \succ_s m}{
  n \leq \ell
  &
  \forall k.\, n \leq k < \ell \Rightarrow \at{s}{k} = B
  &
  \at{s}{\ell} = R
  &
  \ell + 1 = m
}
\]
An equivalent inductive definition is:
\[
\infer{0 \succ_{R\, s} 1}{
}
\quad
\infer{0 \succ_{B\, s} m+1}{
  0 \succ_{s} m
}
\quad
\infer{n+1 \succ_{c\, s} m+1}{
  n \succ_s m 
}
\]
The intuition is that $n \succ_s m$ should hold if and only if,
whenever $\on{X}\, \sub{s}{n}$ is true, then this is justified by $X\,
s|_m$. (This means that $\succ_s$ is deterministic, but not
functional.)  This is what the next lemma proves.

\begin{lemma}
  $\forall s, n.\, \on{X}\, \sub{s}{n} \logequ (\forall m.\, n
  \succ_s m \imp X\, \sub{s}{m})$
\end{lemma}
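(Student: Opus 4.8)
The plan is to prove the two implications separately, handling $(\imp)$ by induction and $(\Leftarrow)$ by coinduction, since $\onnoargs{X}$ is a coinductive (greatest-fixed-point) predicate while $\succ_s$ is inductive. Both directions hold for arbitrary $X$, so nothing about $X$ needs to be assumed.

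For $(\imp)$, I would fix $s$ and prove $\forall n, m.\, n \succ_s m \imp (\on{X}\, \sub{s}{n} \imp X\, \sub{s}{m})$ by induction on the derivation of $n \succ_s m$ in its inductive presentation, using the inversion principle for $\onnoargs{X}$. In the base rule $0 \succ_{R\, s'} 1$ the head is red, so inverting $\on{X}\,(R\, s')$ by the red clause already yields $X\, s' = X\, \sub{s}{1}$. In the rule $0 \succ_{B\, s'}(m+1)$ from $0 \succ_{s'} m$ the head is blue, so inverting by the blue clause gives $\on{X}\, s'$ and the induction hypothesis finishes the case (using $\sub{(B\,s')}{(m+1)} = \sub{s'}{m}$); the shift rule $(n+1)\succ_{c\,s'}(m+1)$ from $n \succ_{s'} m$ is immediate from the induction hypothesis after the reindexings $\sub{(c\,s')}{(n+1)} = \sub{s'}{n}$ and $\sub{(c\,s')}{(m+1)} = \sub{s'}{m}$.

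For $(\Leftarrow)$, I would argue by coinduction on $\onnoargs{X}$. Take as coinductive invariant the predicate $P$ holding of a stream $t$ exactly when $t$ is of the form $\sub{s}{n}$ for some $s, n$ with $\forall m.\, n \succ_s m \imp X\, \sub{s}{m}$; the goal $\on{X}\, \sub{s}{n}$ is then an instance of $P \subseteq \onnoargs{X}$, since $\sub{s}{n}$ satisfies $P$ by the very hypothesis. To verify that $P$ is post-fixed I case-split on the (decidable) head color $\at{s}{n}$, using the head--tail decomposition $\sub{s}{n} = (\at{s}{n})\, \sub{s}{(n+1)}$. If $\at{s}{n} = B$, the blue clause requires $P\, \sub{s}{(n+1)}$, i.e. $\forall m.\, (n+1)\succ_s m \imp X\, \sub{s}{m}$; this follows from the hypothesis for $n$ together with the fact that a blue position may be skipped, $\at{s}{n} = B \imp ((n+1)\succ_s m \imp n \succ_s m)$. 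If $\at{s}{n} = R$, the red clause requires $X\, \sub{s}{(n+1)}$, which the hypothesis delivers because $\at{s}{n} = R \imp n \succ_s (n+1)$.

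Both auxiliary facts about $\succ_s$ fall out immediately from its relational definition (choosing the witness $\ell$ for the first red position), so I would read $\succ_s$ relationally at these two points rather than inductively. The main obstacle is therefore the coinductive direction: getting the invariant $P$ right so that the shift from $n$ to $n+1$ is absorbed inside the corecursion, and isolating exactly the two transparency/monotonicity properties of $\succ_s$ that make the two clauses of $\onnoargs{X}$ go through. The inductive direction is routine once the inversion principle for $\onnoargs{X}$ is in hand.
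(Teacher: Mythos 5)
Your proof is correct, and the $(\imp)$ direction coincides with the paper's: induction on the derivation of $n \succ_s m$ with the same three cases. The $(\Leftarrow)$ direction, however, is organized differently. The paper proves $(\forall m.\, n \succ_s m \imp X\, \sub{s}{m}) \imp \on{X}\, \sub{s}{n}$ by an outer induction on $n$, performing coinduction only in the base case $n = 0$; the point of the stratification is that for $n = n'+1$ and $s = c~s'$ the goal $\on{X}\, \sub{s}{n}$ is literally $\on{X}\, \sub{s'}{n'}$, so no constructor of $\onnoargs{X}$ is produced and a naive corecursive call there would be unguarded. You instead run a single coinduction with the invariant $P\,t \equiv \exists s,n.\, t = \sub{s}{n} \wedge (\forall m.\, n \succ_s m \imp X\, \sub{s}{m})$, keeping $s$ fixed and absorbing the index shift from $n$ to $n+1$ inside the invariant; since each step peels off the head $\at{s}{n}$ of $\sub{s}{n} = (\at{s}{n})\,\sub{s}{(n+1)}$, every corecursive call is guarded and the unguardedness issue never arises. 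The two auxiliary facts you isolate (that a blue head lets $(n+1) \succ_s m$ be weakened to $n \succ_s m$, and that a red head gives $n \succ_s (n+1)$) are exactly the inversions the paper extracts case by case from the inductive presentation of $\succ_s$; reading them off the relational definition is equally fine. Your version is the more uniform one (one coinduction, one invariant); the paper's buys a proof in which the coinduction only ever manipulates streams with explicit head constructors, at the cost of an extra layer of induction on $n$.
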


\begin{proof}
  ($\imp$): We prove $\forall s, n, m.\, n \succ_s m \imp \on{X}\, \sub{s}{n}
  \imp X\, \sub{s}{m}$ by induction on the proof of $n \succ_s m$. 

  The case of $s = R~s'$, $n= 0$ and $m = 1$: From the assumption
  $\on{X}\, \sub{s}{0}$, i.e., $\on{X}\, s$, we directly learn that $X\,
  s'$, i.e., $X\, \sub{s}{1}$.

  The case of $s = B~s'$, $n= 0$ and $m = m'+1$ and $0 \succ_{s'} m'$: The
  assumption $\on{X}\, \sub{s}{0}$, i.e., $\on{X}\, s$, assures us that
  $\on{X}\, s'$, and by the induction hypothesis we have 
  $\on{X}\, \sub{s'}{0} \imp X\, \sub{s'}{m'}$. Hence $X\,
  \sub{s'}{m'}$, i.e., $X\, \sub{s}{m}$.

  The case of $s = c~s'$, $n = n'+1$ and $m = m'+1$ and $n' \succ_{s'} m'$:
  The assumption $\on{X}\, \sub{s}{n}$ amounts to
  $\on{X}\, \sub{s'}{n'}$. By the induction hypothesis,
  $\on{X}\, \sub{s'}{n'} \imp X\, \sub{s'}{m'}$, we get that $X\,
  \sub{s'}{m'}$, i.e., $X\, \sub{s}{m}$.

  ($\Leftarrow$): We prove $\forall s, n.\, (\forall m.\, n \succ_s m
  \imp X\, \sub{s}{m}) \imp \on{X}\, \sub{s}{n}$ by induction on $n$.
  In the base case $n = 0$ of the induction, we perform coinduction.

  The case of $n = 0$ and $s = R~s'$: we know that $0 \succ_s 1$. 
  Hence the assumption $\forall m.\, 0 \succ_s m \imp X\,
  \sub{s}{m}$ gives us that $X\, \sub{s}{1}$, i.e., $X\, s'$, from
  where it follows that $\on{X}\, s$, i.e., $\on{X}\, \sub{s}{0}$.

  The case of $n = 0$ and $s = B~s'$: We know that, if $0 \succ_s m$ for
  any $m$, then $m = m'+ 1$ for some $m'$ and $0 \succ_{s'}
  m'$. Hence the assumption $\forall m.\, 0 \succ_s m \imp X\,
  \sub{s}{m}$ gives us that $\forall m'.\ 0 \succ_{s'} m' \imp X\,
  \sub{s'}{m'}$. By the coinduction hypothesis, it follows that
  $\on{X}\, \sub{s'}{0}$, i.e., $\on{X}\, s'$, from where we learn
  $\on{X}\, s$, i.e., $\on{X}\, \sub{s}{0}$.

  The case of $n = n' +1$ and $s = c~s'$: We observe that $n \succ_s m$
  if $n' \succ_{s'} m'$ and $m = m' +1$. Therefore the assumption $\forall
  m.\, n \succ_s m \imp X\, \sub{s}{m}$ gives us that $\forall m'.\ n'
  \succ_{s'} m' \imp X\, \sub{s'}{m'}$. By the induction hypothesis,
  we get that $\on{X}\, \sub{s'}{n'}$ which is the same as
  $\on{X}\, \sub{s}{n}$.
\end{proof}

It is noteworthy that this lemma, instantiated at $n=0$, gives us a
possible arithmetical definition of the weak until operator
$\onnoargs{X}$ that avoids impredicativity (quantification over
predicates). Indeed, it suggests that we could have defined:
\[
\on{X}\, s 
\quad \logequ \quad 
\forall \ell.\, 
    (\forall k < \ell.\,  \at{s}{k} =B) \wedge \at{s}{\ell} = R 
\imp X~\sub{s}{\ell+1}
\]
To compare, the impredicative definition is:
\[
\on{X}\, s 
\quad \logequ \quad 
\exists Y.\, 
   (\forall s'.\ Y\, (R~s') \imp X\, s')
   \wedge
   (\forall s'.\, Y\, (B~s') \imp Y\, s') 
   \wedge Y\, s 
\]

Further, we have that, for any stream $s$,
$s$ is almost always blue, 
$\pre{s}$, if and only if 0 is accessible with respect to 
$\succ_s$.
The claim follows from the following lemma.

\begin{lemma}\label{lemma:pre_acc}
$\forall s, n.\, \pre{\sub{s}{n}} \logequ \wf{\succ_s}{n}$.
\end{lemma}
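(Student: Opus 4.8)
The plan is to prove the two implications separately, in each case by induction on the inductive predicate occurring in the hypothesis, using the preceding lemma — namely $\on{X}\, \sub{s}{n} \logequ (\forall m.\, n \succ_s m \imp X\, \sub{s}{m})$ — instantiated at $X = \prenoargs$ as the bridge. Read at $X = \prenoargs$, this lemma says exactly that $\on{\prenoargs}\, \sub{s}{n}$ holds iff $\pre{\sub{s}{m}}$ holds for every $m$ with $n \succ_s m$. Since by its introduction rule $\pre{\sub{s}{n}}$ is interderivable with $\on{\prenoargs}\, \sub{s}{n}$, this already exhibits $\prenoargs$ at position $n$ as ``one $\succ_s$-step above'' $\prenoargs$ at the successor positions $m$, which is precisely the shape of the accessibility rule. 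The whole statement is therefore this lemma packaged together with the two fixed-point principles, and the real work is just getting the inductions to line up.

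For ($\Leftarrow$) I would fix $s$ and do induction on the derivation of $\wf{\succ_s}{n}$. By the accessibility rule the induction hypothesis yields $\pre{\sub{s}{m}}$ for every $m$ with $n \succ_s m$, that is, $\forall m.\, n \succ_s m \imp \pre{\sub{s}{m}}$. By the preceding lemma at $X = \prenoargs$ this is literally $\on{\prenoargs}\, \sub{s}{n}$, and the introduction rule for $\prenoargs$ then gives $\pre{\sub{s}{n}}$. This direction is routine.

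For ($\imp$) I would again fix $s$ and apply the least-fixed-point (induction) principle for $\prenoargs$. The only real obstacle is that $\prenoargs$ is a predicate on streams whereas the target $\wf{\succ_s}{n}$ is indexed by a position $n$, and distinct positions may have bisimilar suffixes. I would therefore take as motive the stream predicate $P\, t := \forall n.\, t \sim \sub{s}{n} \imp \wf{\succ_s}{n}$ and discharge the obligation $\forall t.\, \on{P}\, t \imp P\, t$. Given $\on{P}\, t$ and $n$ with $t \sim \sub{s}{n}$, I use that $\onnoargs{P}$ respects bisimilarity to get $\on{P}\, \sub{s}{n}$, then the preceding lemma to obtain $\forall m.\, n \succ_s m \imp P\, \sub{s}{m}$; instantiating each $P\, \sub{s}{m}$ at the position $m$ itself (with $\sub{s}{m} \sim \sub{s}{m}$ by reflexivity) gives $\forall m.\, n \succ_s m \imp \wf{\succ_s}{m}$, and the accessibility rule concludes $\wf{\succ_s}{n}$. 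The bisimilarity bookkeeping inside the motive is the fiddly part; if one treats $\sub{s}{n}$ up to definitional computation, the motive collapses to $P\, \sub{s}{n} := \wf{\succ_s}{n}$ and the argument streamlines to: unfold $\pre{\sub{s}{n}}$ into $\on{\prenoargs}\, \sub{s}{n}$, apply the lemma to get $\forall m.\, n \succ_s m \imp \pre{\sub{s}{m}}$ with its components as smaller derivations, apply the induction hypothesis to each, and close with the accessibility rule.
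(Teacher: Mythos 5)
Your proposal is correct and follows essentially the same route as the paper: both directions are proved by induction on the inductive predicate in the hypothesis, generalizing over the position $n$, with the preceding lemma instantiated at the relevant predicate serving as the bridge between $\onnoargs{}$ and the $\succ_s$-successors. The only cosmetic difference is in ($\imp$): the paper does structural induction on the Mendler-style derivation of $\pre{s'}$ (extracting the witnessing predicate $X$ and applying the induction hypothesis to the premise $\forall s'.\, X\, s' \imp \pre{s'}$), whereas you invoke Park-style fixed-point induction with an explicit bisimilarity-closed motive --- these are interderivable given the monotonicity of $\onnoargs{X}$ in $X$, and the paper's footnote performs the same generalization over positions that your motive makes explicit.
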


\begin{proof}
  ($\imp$): We prove $\forall s, n.\, \pre{\sub{s}{n}} \imp \wf{\succ_s}{n}$ by induction  on the proof of $\pre{\sub{s}{n}}$.\footnote{To be fully precise, we prove $\forall s'.\, \pre{s'} \imp (\forall s, n.\, s' = \sub{s}{n} \imp \wf{\succ_s}{n})$ by induction on the proof of $\pre{s'}$. In further proofs we will use these generalizations of coinduction and induction without comments.} From this
  proof, we have that, for some stream predicate $X$, $\forall s'.\,
  X\, s' \imp \pre{s'}$ and $\on{X} \sub{s}{n}$.  By the induction
  hypothesis, the former gives us $\forall m.\, X\, \sub{s}{m} \imp
  \wf{\succ_s}\, m$ while, by the previous lemma, the latter gives
  $\forall m.\, n \succ_s m \Rightarrow X\, \sub{s}{m}$.  Putting the
  two together, we get $\forall m.\, n \succ_s\, m \Rightarrow
  \wf{\succ_s}{m}$, hence $\wf{\succ_s}{n}$.

  ($\Leftarrow$): By induction on the proof of $\wf{\succ_s}{n}$. We have
  $\forall m.\, n \succ_s\, m \Rightarrow \wf{\succ_s}{m}$ and by the
  induction hypothesis, $\forall m.\, n \succ_s\, m \Rightarrow
  \pre{\sub{s}{m}}$.  The previous lemma therefore
  gives us $\on{\prenoargs}\, \sub{s}{n}$, hence $\pre{\sub{s}{n}}$,
  as required.
\end{proof}

\begin{corollary}\label{coro:pre_acc}
$\forall s.\, \pre{s} \logequ \wf{\succ_s}{0}$.
\end{corollary}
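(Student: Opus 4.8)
The plan is to obtain this as an immediate specialization of Lemma~\ref{lemma:pre_acc}. That lemma establishes the equivalence $\pre{\sub{s}{n}} \logequ \wf{\succ_s}{n}$ for all streams $s$ and all natural numbers $n$, so the corollary should fall out simply by fixing $n = 0$. There is no new induction or coinduction to perform here; the content is entirely inherited from the lemma.

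First I would instantiate Lemma~\ref{lemma:pre_acc} at $n = 0$, yielding $\pre{\sub{s}{0}} \logequ \wf{\succ_s}{0}$. The only remaining task is to rewrite the left-hand side. By the defining equation of the suffix operation, $\sub{s}{0} = s$, so $\pre{\sub{s}{0}}$ is literally $\pre{s}$. Substituting this identity into the instantiated lemma gives exactly $\pre{s} \logequ \wf{\succ_s}{0}$, as required.

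I expect essentially no obstacle in this step: all the genuine work---namely the induction on the least-fixed-point derivation of $\prenoargs$ and the induction on the accessibility derivation for $\succ_s$, each mediated by the preceding characterization of $\onnoargs{X}$ in terms of $\succ_s$---has already been discharged in the proof of Lemma~\ref{lemma:pre_acc}. The corollary merely records the case $n = 0$, which is the one of primary interest, since it expresses precisely that a stream is almost always blue exactly when $0$ is accessible for the induced relation $\succ_s$.
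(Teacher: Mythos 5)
Your proposal is correct and matches the paper exactly: the corollary is obtained by instantiating Lemma~\ref{lemma:pre_acc} at $n = 0$ and using $\sub{s}{0} = s$, which is precisely how the paper derives it. No further comment is needed.
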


\medskip

We can in fact rephrase
the variant from Section~\ref{sec:streamless} 
(streams for which the sets of red positions
are streamless) and the variant 
from Section~\ref{sec:notinfinitelyoften} (streams
that are not infinitely often red)
in terms of $\succ_s$, as we will do now.

\subsubsection{Strong normalization}

\newcommand{\chain}[2]{\mathit{chain}_#1\, #2}

Streams whose red positions form streamless sets 
correspond to streams $s$ for which $\succ_s$ is
strongly normalizing at 0. 

For any set $U$ and any relation $\succ$ on $U$, 
we define (descending) chains in $\succ$ coinductively by
\[
\infer={ \nil : \chain{\succ}{x_0}}{ 
  x_0 : U
}
\quad
\infer={ x_1~\ell : \chain{\succ}{x_0} }{
  x_0 \succ x_1
  &
  \ell : \chain{\succ}{x_1} 
}
\]
so that $x_1x_2...x_n\nil : \chain{\succ}{x_0}$ means that 
$x_0 \succ x_1 \succ x_2 \succ \ldots \succ x_n$.
Note that a chain in $\succ$ may be infinite. 

We define finiteness of chains inductively by
\[
\infer{\finite{\nil}}{
}
\quad
\infer{\finite{x~\ell}}{
  \finite{\ell}
}
\] 
We use the same notation for 
finiteness of colists and chains.

A binary
relation $\succ$ on a set $U$ is strongly normalizing at $x:U$,
$\mathrm{SN} \succ x$, if any
$\succ$-chain starting at $x$ is finite, 
or $\forall \ell : \chain{\succ}{x}.\, \finite{\ell}$.

For any stream $s$, $\succ_s$ is strongly normalizing at 0
if and only if $\Reds{s}$ is streamless. 

\begin{lemma}\label{lemma:streamless_SN}
$\forall s.\, \mathrm{SN} \succ_s 0 \logequ \Reds{s} ~\mbox{is streamless}$.
\end{lemma}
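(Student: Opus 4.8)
The plan is to set up finiteness-preserving conversions between $\succ_s$-chains starting at $0$ and duplicate-free colists over $\Reds{s}$. Both $\mathrm{SN}\,\succ_s\,0$ and streamlessness are universally quantified finiteness statements, so converting an arbitrary chain into a colist with the same finiteness status gives ($\Leftarrow$), and converting an arbitrary colist into a chain whose finiteness reflects back onto the colist gives ($\imp$).

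For ($\Leftarrow$), suppose $\Reds{s}$ is streamless and let $c : \chain{{\succ_s}}{0}$ be arbitrary; I must show $\finite{c}$. By definition of $\succ_s$, each step $p \succ_s m$ certifies a red position at $m-1$, and along a chain the values strictly increase (since $p \succ_s m$ forces $m > p$). Hence the positions read off from the successive steps of $c$ are strictly increasing, in particular distinct, and so assemble corecursively into a colist $\bar c : \colist\,\Reds{s}$ (distinctness is exactly what the duplicate-free typing discipline demands). This read-off is length-preserving, so by induction $\finite{\bar c} \logequ \finite{c}$; streamlessness yields $\finite{\bar c}$, hence $\finite{c}$, and as $c$ was arbitrary we obtain $\mathrm{SN}\,\succ_s\,0$.

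For ($\imp$) the obstacle is the mismatch in shape: a colist may list red positions in any order, while a $\succ_s$-chain is forced to visit them increasingly and, moreover, to pass through every red position lying in between. I first normalize using the thinning function $f$ from the proof of Lemma~\ref{lemma:almostfull_iff_streamless}: $f\,0\,\ell$ is a strictly increasing colist of red positions that is finite exactly when $\ell$ is, so it suffices to prove any increasing colist $r_1 < r_2 < \cdots$ of red positions finite. From such a colist I build a $\succ_s$-chain from $0$ by corecursion, maintaining the invariant that the current position $p$ is at most the head $r$ of the remaining colist: I locate the first red in $[p,r]$ by \emph{bounded} search — which terminates precisely because $r$ itself is red — emit the corresponding $\succ_s$-step to that red position plus one, advance $p$, and drop the head $r$ exactly when the located red equals $r$. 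This needs no unbounded ``search for the next red'', so it is constructive, and it yields a genuine chain that simply stops with $\nil$ once the colist is exhausted.

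Each colist element is consumed by exactly one step of this chain (the chain may insert extra steps for reds strictly between consecutive $r_i$), so the colist is no longer than the chain. By $\mathrm{SN}\,\succ_s\,0$ the chain is finite, and an induction on the derivation of its finiteness transfers finiteness to the increasing colist: in a ``skip'' step the induction hypothesis already delivers the remaining colist's finiteness from the shorter finiteness proof, while in a ``consume'' step it delivers finiteness of the tail, whence of the whole colist by the rule for $\finite{x~\ell}$. Pulling back through $f$ then gives $\finite{\ell}$. This reflection of finiteness across the length gap between chain and colist is the main technical point; the conversions themselves use only the structural (co)recursion and (co)induction already exercised in the preceding lemmas.
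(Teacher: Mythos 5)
Your proposal is correct, and its overall strategy --- finiteness-preserving conversions between $\succ_s$-chains from $0$ and duplicate-free colists over $\Reds{s}$ --- is also the paper's; your ($\Leftarrow$) direction (read each chain step $p \succ_s m$ as the red position $m-1$, i.e.\ shift down by one) is literally the paper's. The two proofs part ways only in the ($\imp$) direction. The paper first observes that $\mathrm{SN} \succ_s 0$ is equivalent to $\mathrm{SN} \succ^+_s 0$ for the transitive closure $\succ^+_s$, thins the colist to an increasing one, and maps it by $n \mapsto n+1$ directly onto a $\succ^+_s$-chain of exactly the same length, so finiteness transfers trivially in both directions. You instead build an honest $\succ_s$-chain by interpolating, via bounded search below the red head of the colist, all the red positions lying between consecutive listed ones, and then reflect finiteness from the longer chain back onto the colist by induction on the derivation of the chain's finiteness (with the base case secured because your chain is $\nil$ exactly when the remaining colist is). These are two realizations of the same gap-bridging step: the induction you carry out explicitly is essentially the content of the paper's unproved side remark that $\mathrm{SN} \succ_s 0$ and $\mathrm{SN} \succ^+_s 0$ coincide, so your version is the more self-contained of the two, at the price of a more delicate corecursive construction --- whose productivity you do justify, since each corecursive step emits one chain element and the inner search is bounded. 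Your reuse of the thinning function from Lemma~\ref{lemma:almostfull_iff_streamless}, together with its finiteness-preservation property, matches what the paper itself does with its own thinning function here, so no gap arises there either.
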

\begin{proof}
($\imp$): 
We first notice that
$\mathrm{SN} \succ_s 0$ if and only if
$\mathrm{SN} \succ^+_s 0$,
where $\succ^+_s$ is the transitive closure of 
$\succ_s$.
Define a function $f: \nat \rar \colist\, \nat \rar \colist\, \nat$
by recursion by
\[\begin{array}{ll} 
f\, n\, \nil = \nil\\
f\, n\, (m~\ell) = m~\ell & \mbox{if}~n < m\\
f\, n\, (m~\ell) = f\, n\, \ell & \mbox{if}~n \geq m\\
\end{array}\]
The computation of $f\, n\, \ell$ is terminating as $\ell$ is duplicate-free.
(So, $f\, n\, \ell$ is welldefined.)
Moreover, define a function $g: \colist\, \nat \rar\ \colist\, \nat$
by corecursion by
\[
g\, \nil = \nil
\qquad
g\, (n~\ell) = (n+1)~(g\, (f\, n\, \ell))
\]
We have that, for any duplicate-free colist $\ell$
over $\Reds{s}$, $\ell$ is finite if and only if $g\, \ell$ is finite,
and moreover $g\, \ell$ is a chain in $\succ^+_s$ starting at 0.

Now, for any given duplicate-free colist $\ell : \colist\, \Reds{s}$,
by our assumption, $g\, \ell$ is finite,
which implies $\ell$ is finite, as required. 

($\Leftarrow$):
Define a function $f:\colist\, (\nat \setminus \{0\})
\rar \colist\, \nat$ 
by corecursion by
\[
f\, \nil = \nil
\qquad
f\, (n~\ell) = (n-1) ~(f~\ell)
\]
so that $f\, \ell$ shifts the elements in $\ell$ by subtracting one. 

For any given $\ell :\chain{{\succ_s}}{0}$,
$f\, \ell$ is a duplicate-free colist over $\Reds{s}$,
therefore $f\, \ell$ is finite by our assumption.
By construction of $f$, $\ell$ is finite, 
which completes the proof.
\end{proof}

\subsubsection{Antifoundedness}

Streams that are infinitely often red
correspond to streams $s$ for which $\succ_s$ is
antifounded. 

We define antifoundedness of binary relation 
$\succ$ on a set $U$ coinductively by 
\[
\infer={\af{\succ}{n}}{
  n \succ\, m  & \af{\succ}{m}
}
\]
so that $\af{\succ}{n}$ means that
there is an infinite descending chain in $\succ$ starting from $n$.

Firstly we rephrase the strong until operator, $\popnoargs{X}$,
which, unlike the weak until operator $\onnoargs{X}$, 
requires $X$ to hold at some point. 

\begin{lemma}
  $\forall s, n.\, \pop{X}\, \sub{s}{n} \logequ (\exists m.\, n
  \succ_s m \wedge X\, \sub{s}{m})$.
\end{lemma}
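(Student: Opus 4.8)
The plan is to prove the two directions by induction, in a manner dual to the preceding lemma relating the weak until operator $\onnoargs{X}$ to $\succ_s$. Since $\succ_s$ is deterministic, the right-hand side $\exists m.\, n \succ_s m \wedge X\, \sub{s}{m}$ says exactly that the first red position at or after $n$ exists and that $X$ holds on the suffix immediately past it, which is precisely what the inductively defined $\popnoargs{X}$ expresses. Unlike the weak-until lemma, no coinduction is needed here: both $\popnoargs{X}$ and $\succ_s$ are inductive, so each direction is a plain induction on a derivation. For ($\imp$) I would induct on the derivation of $\pop{X}\, \sub{s}{n}$ (generalizing over $s$ and $n$, as in the footnote to Lemma~\ref{lemma:pre_acc}); for ($\Leftarrow$) I would fix the witness $m$ and induct on the derivation of $n \succ_s m$.

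For ($\imp$), the red case $\sub{s}{n} = R~s'$ with premise $X~s'$ gives $\at{s}{n} = R$ and $\sub{s}{n+1} = s'$, so $n \succ_s (n+1)$ holds with first red index $\ell = n$, and $X\, \sub{s}{n+1}$ is the premise; take $m = n+1$. In the blue case $\sub{s}{n} = B~s'$ with premise $\pop{X}\, s'$, we have $\at{s}{n} = B$ and $\sub{s}{n+1} = s'$, so the induction hypothesis at $n+1$ yields $m$ with $(n+1) \succ_s m$ and $X\, \sub{s}{m}$, and the same $m$ works for $n$. For ($\Leftarrow$), I would follow the inductive presentation of $\succ_s$: the base rule $0 \succ_{R\,s'} 1$ has $\sub{s}{1} = s'$, so the premise $X\, \sub{s}{1}$, i.e. $X~s'$, feeds the red rule to give $\pop{X}\, (R~s')$, i.e. $\pop{X}\, \sub{s}{0}$; the rule $0 \succ_{B\,s'} m'+1$ from $0 \succ_{s'} m'$ lets the induction hypothesis (with witness $X\, \sub{s}{m'+1} = X\, \sub{s'}{m'}$) produce $\pop{X}\, s'$, which the blue rule lifts to $\pop{X}\, (B~s')$; and the rule $n'+1 \succ_{c\,s'} m'+1$ from $n' \succ_{s'} m'$ is discharged by the induction hypothesis using $\sub{s}{n'+1} = \sub{s'}{n'}$.

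The one genuinely delicate step is the position-lifting in the blue case of ($\imp$): concluding $n \succ_s m$ from $(n+1) \succ_s m$ when $\at{s}{n} = B$. I expect this to be cleanest against the explicit arithmetic definition of $\succ_s$, where the witnessing first-red index $\ell$ is unchanged and its preceding all-blue segment merely extends one position to the left; the inductive presentation of $\succ_s$ is less convenient here, since it peels off the head of $s$ rather than relating the thresholds $n$ and $n+1$. Everything else is routine bookkeeping, and the systematic duality with the weak-until lemma means that no new ideas are required.
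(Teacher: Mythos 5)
Your proof is correct, and the ($\Leftarrow$) direction coincides with the paper's: induction on the derivation of $n \succ_s m$, with exactly the three cases you list. The ($\imp$) direction is organized differently. The paper proves it by a nested induction --- an outer induction on $n$, with a further inner induction on the derivation of $\pop{X}\, s$ only in the base case $n=0$ --- and in each case produces the witness as $m'+1$ by applying the shift rules of the inductive presentation of $\succ_s$ (e.g.\ $0 \succ_{s'} m' \imp 0 \succ_{B\,s'} m'+1$ and $n' \succ_{s'} m' \imp n'+1 \succ_{c\,s'} m'+1$), so positions are always relative to the suffix being peeled. You instead run a single structural induction on the derivation of $\pop{X}\, \sub{s}{n}$ with the statement generalized over $(s,n)$, keep $s$ fixed with absolute positions, and reuse the same witness $m$ across the blue step; the price is the auxiliary position-lifting fact $(n+1) \succ_s m \wedge \at{s}{n} = B \imp n \succ_s m$, which you correctly identify as the delicate point and correctly discharge against the arithmetic definition of $\succ_s$ (the witnessing $\ell$ is unchanged and the all-blue segment extends one step left). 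Both arguments are sound; yours trades the double induction for one extra lemma about $\succ_s$, while the paper's avoids that lemma by letting the inductive rules of $\succ_s$ do the index bookkeeping. Your observation that no coinduction is needed (in contrast to the preceding lemma for $\onnoargs{X}$) also matches the paper.
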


\begin{proof}
  ($\imp$): By induction on $n$ and in the base case $n = 0$ also further induction on the proof of $\pop{X}\, s$.

  The case of $n= 0$ and $s = R~s'$: We have that $0 \prec_s
  1$ and $X\, s'$ and can choose $m = 1$.

  The case of $n= 0$ and $s = B~s'$: We have that $\pop{X}\, s'$. The inner
  induction hypothesis gives us that there is an $m'$ such
  that $0 \succ_{s'} m' \wedge X\, \sub{s'}{m'}$. But then we also have
  that $0 \succ_s m'+1 \wedge X\, \sub{s}{m'+1}$, so the desired result is
  witnessed by $m = m' + 1$.

  The case of $n = n'+1$ and $s = c~s'$: The assumption
  $\pop{X}\, \sub{s}{n}$ amounts to $\pop{X}\, \sub{s'}{n'}$. By the
  outer induction hypothesis, there is an $m'$ such that $n' \succ_s m'
  \wedge X\, \sub{s'}{m'}$. But then also $n \succ_s m'+1 \wedge X\,
  \sub{s}{m'+1}$, so we can choose $m = m'+1$.

  ($\Leftarrow$): We prove $\forall s, n, m.\ n \succ_s m \wedge X\,
  \sub{s}{m} \imp \pop{X}\, \sub{s}{n}$ by induction on the proof of $n
    \succ_s m$.

The case of $s = R~s'$, $n = 0$ and $m=1$: The assumption $X\,
\sub{s}{1}$, i.e., $X\, s'$, implies $\pop{X}\, s$, i.e.,
$\pop{X}\, \sub{s}{0}$.

The case of $s = B~s'$, $n = 0$, $m = m' +1$ and $0 \succ_{s'} m'$: The
assumption $X\, \sub{s}{m}$ amounts to $X\, \sub{s'}{m'}$.  By the
induction hypothesis, we have that $\pop{X}\, \sub{s'}{0}$, from where
$\pop{X}\, \sub{s}{0}$ follows in turn.

The case of $s = c~s'$, $n = n'+1$, $m = m' +1$ and $n' \succ_{s'} m'$: The
assumption $X\, \sub{s}{m}$ amounts to $X\, \sub{s'}{m'}$. By the
induction hypothesis, it holds that $\pop{X}\, \sub{s'}{n'}$, which is
the same as $\pop{X}\, \sub{s}{n}$.
\end{proof}

Then we have that, for any stream $s$,
$s$ is infinitely often red, $\rep{s}$, if and only if 
0 is antifounded with respect to $\succ_s$.
The claim follows from the following lemma.

\begin{lemma}
$\forall s, n.\, \rep{\sub{s}{n}} \logequ \af{\succ_s}{n}$.
\end{lemma}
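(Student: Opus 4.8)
The plan is to mirror the proof of Lemma~\ref{lemma:pre_acc}, but dualized: where that proof used induction on least fixed points, here both $\repnoargs$ and $\af{\succ_s}{\cdot}$ are \emph{greatest} fixed points, so both directions proceed by coinduction, each time using the preceding lemma (relating $\popnoargs{X}$ to $\succ_s$) to rewrite the one-step unfolding. In each direction I use two ingredients: the inversion principle for a greatest fixed point (its defining fixed-point equation read left-to-right) and the coinduction principle (exhibiting an invariant closed under the generating rule).

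For the ($\imp$) direction I would fix $s$ and prove $\af{\succ_s}{n}$ by coinduction, taking $\lambda n.\, \rep{\sub{s}{n}}$ as the coinduction invariant. Given $\rep{\sub{s}{n}}$, inverting the coinductive definition of $\repnoargs$ yields $\pop{\repnoargs}\, \sub{s}{n}$, and applying the preceding lemma with $X = \repnoargs$ produces an $m$ with $n \succ_s m$ and $\rep{\sub{s}{m}}$. This is exactly the data the $\af$-rule demands: a $\succ_s$-successor $m$ of $n$ whose suffix again satisfies the invariant. Coinduction then gives $\af{\succ_s}{n}$.

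For the ($\Leftarrow$) direction I would prove $\rep{\sub{s}{n}}$ by coinduction on $\repnoargs$, using as invariant the stream predicate $Q\, t :\equiv \exists s, n.\, t = \sub{s}{n} \wedge \af{\succ_s}{n}$. To discharge the coinduction step I must show $\pop{Q}\, \sub{s}{n}$ whenever $\af{\succ_s}{n}$. Inverting the coinductive definition of $\af$ gives an $m$ with $n \succ_s m$ and $\af{\succ_s}{m}$, so $Q\, \sub{s}{m}$ holds (witnessed by the \emph{same} $s$); the preceding lemma with $X = Q$ then converts $\exists m.\, n \succ_s m \wedge Q\, \sub{s}{m}$ into $\pop{Q}\, \sub{s}{n}$, as required. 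Coinduction concludes $\af{\succ_s}{n} \imp \rep{\sub{s}{n}}$.

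The only real subtlety — and the step I would be most careful about — is the generalization over $s$ and $n$ in the invariant $Q$ of the ($\Leftarrow$) direction (the same device used in the footnote to Lemma~\ref{lemma:pre_acc}). Since $\succ_s$ refers to the whole of $s$ and not merely to the suffix $\sub{s}{n}$, the coinduction hypothesis cannot be phrased in terms of the current stream alone, so both $s$ and the offset $n$ must be carried explicitly. Everything else is bookkeeping: two inversions, unfolding the greatest fixed points $\repnoargs$ and $\af$, and two invocations of the preceding lemma, whose quantifier shape $\exists m.\, n \succ_s m \wedge (\ldots)$ matches on the nose in both directions.
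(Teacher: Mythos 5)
Your proposal is correct and follows essentially the same route as the paper's own proof: both directions by coinduction, inverting the greatest fixed point ($\repnoargs$ resp.\ $\af{\succ_s}{\cdot}$) and using the preceding lemma relating $\pop{X}\,\sub{s}{n}$ to $\exists m.\, n \succ_s m \wedge X\,\sub{s}{m}$ to transport the one-step unfolding. The only cosmetic difference is that you spell out the coinduction invariant $Q$ explicitly, which the paper leaves implicit (cf.\ the footnote to Lemma~\ref{lemma:pre_acc}).
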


\begin{proof}
  ($\imp$): By coinduction. From the assumption $\rep{\sub{s}{n}}$, we have
  that, for some stream predicate $X$, $\forall s'.\ X\, s' \imp
  \rep{s'}$ and $\pop{X}\, \sub{s}{n}$. The former and  the
  coinduction hypothesis together give us that, 
  $\forall m'.\ X\, \sub{s}{m'} \imp
  \af{\succ_s}{m'}$. From the latter and the previous lemma, it follows
  that there
  exists an $m$ such that $n \succ_s m$ and $X\, \sub{s}{m}$.  Hence
  $\af{\succ_s}{m}$ and we can also conclude that $\af{\succ_s}{n}$.

  ($\Leftarrow$): By coinduction. From the assumption $\af{\succ_s}{n}$, we
  have that there exists some $m$ such that $n \succ_s m$ and
  $\af{\succ_s}{m}$. By the coinduction hypothesis, we have $\rep{\sub{s}{m}}$. 
  By the
  previous lemma it follows now that $\pop{\repnoargs}\, \sub{s}{n}$ whereby
  we also learn that $\rep{\sub{s}{n}}$.
\end{proof}

\begin{corollary}\label{coro:rep_af}
$\forall s.\, \rep{s} \logequ \af{\succ_s}{0}$.
\end{corollary}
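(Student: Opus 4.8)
The plan is to obtain this as an immediate specialization of the lemma that directly precedes it, namely $\forall s, n.\, \rep{\sub{s}{n}} \logequ \af{\succ_s}{n}$. That lemma is the genuinely general statement—it quantifies over the starting position $n$ and was established by coinduction in both directions, using the characterization of $\popnoargs{X}$ in terms of $\succ_s$. Since the corollary is just the case $n = 0$, all the real work has already been done, and I would simply instantiate $n := 0$ in that lemma.

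The one observation needed to close the gap is the definitional equation $\sub{s}{0} = s$, which is the first clause of the suffix definition $\sub{s}{0} = s$. Using it, the left-hand side $\rep{\sub{s}{0}}$ of the instantiated lemma rewrites to $\rep{s}$, which is exactly the left-hand side of the corollary. The right-hand side $\af{\succ_s}{0}$ already has the desired form, so no further manipulation is required, and the equivalence $\rep{s} \logequ \af{\succ_s}{0}$ follows at once.

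I expect no real obstacle here: the coinductive reasoning that could have been delicate (respecting bisimilarity, unfolding the greatest fixed point $\repnoargs$, and tracking the relation $\succ_s$ across suffixes) is entirely confined to the preceding lemma. This corollary is the exact mirror of Corollary~\ref{coro:pre_acc}, which is derived from Lemma~\ref{lemma:pre_acc} in precisely the same way, namely by instantiating the generalized-over-$n$ statement at $n = 0$ and invoking $\sub{s}{0} = s$.
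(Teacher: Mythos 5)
Your proposal is correct and matches the paper exactly: the corollary is obtained by instantiating the preceding lemma $\forall s, n.\, \rep{\sub{s}{n}} \logequ \af{\succ_s}{n}$ at $n = 0$ and using the definitional equation $\sub{s}{0} = s$. The paper gives no separate argument for the corollary, so all the substance indeed lives in the lemma, just as you describe.
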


\subsection{Classical fixed point}\label{sec:fixedpoint}

It turns out that the weak until operator $\onnoargs{X}$
reaches the fixed point by $\omega$-iteration only classically.
In fact, we have a stronger result: closure at $\omega$ 
is equivalent to LPO. 
Define:
\[
\infer{F^{\omega} s}{ F^n~s}
\]
where $F^0 = \False$ and $F^{n+1} = \onnoargs{F^n}$, 
so that $F^{\omega}$ is $\onnoargs{X}$ iterated $\omega$ times.

\begin{lemma}\label{lemma:omega_closure}
$(\forall s.\, \onnoargs{F^{\omega}}~s  \imp F^{\omega}~s)
\logequ (\forall s.\, \F\, \hred\,s \vee \G\, \hblue\, s)$.
\end{lemma}
\begin{proof}
($\imp$): Define $f:\nat \rar \str \rar \str$
and $g: \nat \rar \str$ by corecursion
\[
\begin{array}{c}
f~n~(B~s) = B~(f~(n+1)~s)
\qquad
f~n~(R~s) = g~n\\[1ex]
g~(n+1) = R~(g~n)
\qquad
g~0 = B^{\infty}
\end{array}\]
where $B^{\infty}$ denotes a stream of blue, 
defined by corecursion by $B^{\infty} = B~B^{\infty}$.
The computation of $f~0~s$ looks for the first occurrence of red in $s$,
while keeping track of the number of blue it has seen so far
in the second argument. 
On encountering the first red (if exists), it 
invokes $g$, passing $n$ as argument. 
The stream that $g~n$ produces is red up to the $n$-th position, 
followed by an all blue stream. The trick is to record
the position of the first occurrence of red in $s$ in terms 
of the number of red in $f~0~s$. 
If $s$ does not contain red, then $f~0~s$ does not either. 
This way, if we know the bound
on the number of red in $f~0~s$, then we know the bound
on the depth of the first occurrence of red in $s$.
We prove $\forall n.\, F^{n+1}~(g~n)$
by induction on $n$, then $\forall n, s.\, 
\on{F^{\omega}}{(f~n~s)}$ by coinduction.
We deduce $\forall s.\, F^{\omega}~(f~0~s)$ by our assumption,
therefore $\forall s.\, \exists n.\, F^n~(f~0~s)$
by definition. 
For any $s$, given $F^n~(f~0~s)$ for some $n$, however, 
it suffices to examine the initial $(n+1)$-segment of $s$
to know whether $s$ contains red or not,
enabling us to decide whether $\F\, \hred\, s$ or
$\G\, \hblue\, s$ holds. 

($\Leftarrow$):
For any given $s$, suppose $\on{F^{\omega}}\, s$. 
By our assumption, we have either $\G\, \hblue\, s$ or $\F\, \hred\, s$.
In the case of $\G\, \hblue\, s$, we immediately have
$F^1 s$, therefore $F^{\omega}\, s$.
In the case of $\F\, \hred\, s$, 
let $n$ be the position of the first occurrence of red in $s$,
which is guaranteed to exist by $\F\, \hred\, s$.
From $\on{F^{\omega}}\, s$,
we deduce $\F^{\omega}~\sub{s}{n+1}$, 
i.e., $\F^m\, \sub{s}{n+1}$ for some $m$, 
which yields
$F^{m+1}~s$, therefore $F^{\omega}~s$ as required. 
\end{proof}

In fact, $F^n$ is equivalent to $\atmostnoargs{n}$.
Namely we have that, $\forall n, s.\, F^n~s \logequ \atmost{n}{s}$.
It is an open question whether there is a constructive closure ordinal.

\section{Analysis of the spectrum}\label{sec:hierarchy}

In this section, we analyze our spectrum of streams
being finitely red. We have presented six variants:
\begin{enumerate}[(a)]
\item Eventually all blue
\item Boundedly red
\item Almost always blue
\item Streamless red positions
\item Double negation of eventually all blue
\item Negation of infinitely often red
\end{enumerate}\smallskip

\noindent We have a clear view on relative 
strength between positive variations. For negative ones,
open questions remain. 
The overall picture is given in Section~\ref{sec:conclusion}.

\medskip

We start from downward implications. 
The six variations above
are listed in decreasing order of constructive
strength, except that we do not know whether
(d) implies (e): we only know that (c) implies (d) and (e),
both of which imply (f)
(Lemmata~\ref{lemma:muW_implies_SN},
\ref{lemma:muW_implies_notnotFG},
\ref{lemma:SN_implies_notnuU}
and~\ref{lemma:notnotFG_implies_notnuU})
and that (e) $\imp$ (d) amounts to Markov's Principle
(Lemma~\ref{lemma:notnotFG_implies_SN}).

\medskip

If a stream is eventually all blue,
then it is boundedly red. 

\begin{lemma}\label{lemma:FG_implies_atmost}
$\forall s.\, \F\, (\G\, \hblue)\, s \imp 
\exists n.\, \atmost{n}{s}$.
\end{lemma}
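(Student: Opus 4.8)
The plan is to induct on the derivation of $\F\, (\G\, \hblue)\, s$, which is legitimate since $\F$ is the inductively defined ``finally'' modality. This splits into the base case, where the witness $\G\, \hblue\, s$ is directly available, and the step case $s = c~s'$ with $\F\, (\G\, \hblue)\, s'$ together with the induction hypothesis $\exists n.\, \atmost{n}{s'}$.

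For the base case I would first prove a small auxiliary fact, namely $\forall s.\, \G\, \hblue\, s \imp \atmost{1}{s}$, by coinduction on $\atmostnoargs{1}$: if $\G\, \hblue\, s$ then $s = B~s'$ with $\G\, \hblue\, s'$, and the coinductive rule $\atmost{1}{(B~s')}$ from $\atmost{1}{s'}$ closes the cycle. This yields the witness $n = 1$ whenever the stream is already all blue.

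For the step case, let $n_0$ be such that $\atmost{n_0}{s'}$. If $c = R$, the red rule $\atmost{n_0+1}{(R~s')}$ from $\atmost{n_0}{s'}$ applies directly (matching $n = n_0$ in the rule), giving $\atmost{n_0+1}{s}$, so $n_0 + 1$ is a witness. If $c = B$, the blue rule $\atmost{n+1}{(B~s')}$ from $\atmost{n+1}{s'}$ requires the index to be a successor; here the crucial bookkeeping point is that, since there are no clauses for $\atmostnoargs{0}$ (it is everywhere false), $\atmost{n_0}{s'}$ already forces $n_0 = n' + 1$ for some $n'$, and then $\atmost{n_0}{(B~s')}$, i.e. $\atmost{n_0}{s}$, follows, so $n_0$ is again a witness. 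Each step thus applies exactly one rule of the coinductive definition to an already-established premise, which is unproblematic.

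The only step that is not pure rule-shuffling is the auxiliary coinduction in the base case; the rest is a matter of tracking how the bound increments — it grows by one exactly when a red is prepended. The subtle detail to get right is the positivity of the index in the blue step, which is precisely what the everywhere-falsity of $\atmostnoargs{0}$ guarantees. Alternatively, one could route the whole argument through the function-view equivalences $\F\, (\G\, \hblue)\, s \logequ \exists n.\, \forall m \geq n.\, \stof\, s\, m = B$ and $\atmost{n}{s} \logequ \forall m.\, \#\set{k \leq m \mid \stof\, s\, k = R} < n$, observing that if all positions from $n$ onward are blue then every red sits below $n$, so the count stays below $n+1$; but the direct induction above is more self-contained.
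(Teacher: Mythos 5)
Your proof is correct and follows exactly the route the paper takes: the paper's proof is the one-line ``By induction on the proof of $\F\, (\G\, \hblue)\, s$'', and your argument fills in that induction faithfully, including the auxiliary coinduction $\G\, \hblue\, s \imp \atmost{1}{s}$ for the base case and the correct observation that the everywhere-falsity of $\atmostnoargs{0}$ lets the blue step reuse the same witness without a separate weakening lemma.
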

\begin{proof}
By induction on the proof of $\F\, (\G\, \hblue)\, s$.
\end{proof}

If a stream is boundedly red,
then it is almost always blue.

\begin{lemma}\label{lemma:atmost_implies_pre}
$\forall n, s.\, \atmost{n}{s} \imp
\pre{s}$.
\end{lemma}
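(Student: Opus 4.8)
The plan is to proceed by induction on $n$, peeling off the bound on the number of reds, while discharging the weak-until obligation for $\prenoargs$ coinductively at each fixed count. Since $\atmostnoargs{0}$ has no introduction rules, the base case $n = 0$ is vacuous: $\atmost{0}{s}$ is uninhabited, so $\atmost{0}{s} \imp \pre{s}$ holds trivially.

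For the step case $n = n'+1$, the induction hypothesis supplies $\forall s.\, \atmost{n'}{s} \imp \pre{s}$. Given $\atmost{n'+1}{s}$, I want $\pre{s}$, and by the (Park-style) introduction rule for $\prenoargs$ it suffices to establish $\on{\prenoargs}{s}$. I would prove the auxiliary claim $\forall s.\, \atmost{n'+1}{s} \imp \on{\prenoargs}{s}$ by coinduction, analysing the head of $s$. If $s = B~s'$, then inverting $\atmost{n'+1}{(B~s')}$ yields $\atmost{n'+1}{s'}$ with the count unchanged, and the coinduction hypothesis (applied under the $B$-constructor, hence guarded) gives $\on{\prenoargs}{s'}$, from which the $B$-rule of $\onnoargs{X}$ delivers $\on{\prenoargs}{(B~s')}$. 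If $s = R~s'$, then inverting $\atmost{n'+1}{(R~s')}$ drops the count to $\atmost{n'}{s'}$, the outer induction hypothesis turns this into $\pre{s'}$, and the $R$-rule of $\onnoargs{X}$ delivers $\on{\prenoargs}{(R~s')}$ without appealing to the coinduction hypothesis at all.

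The insight driving the argument is the way the count $n$ interacts with the two colours: a blue leaves the count fixed and is matched by a guarded coinductive step of the weak until, whereas a red decrements the count and is matched by invoking the outer induction on $n$. The potentially delicate point is the nesting of the two forms of reasoning — an outer induction on the numeral $n$ with an inner coinduction for $\onnoargs{X}$ — and in particular checking guardedness: the coinduction hypothesis is used only in the blue branch, where it sits beneath a $B$-constructor, while the red branch is discharged entirely by the induction hypothesis, so no ill-founded circularity arises.
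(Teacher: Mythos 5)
Your proof is correct and follows essentially the same route as the paper's: an outer induction on $n$ with a vacuous base case, and for the step case a coinductive proof of $\forall s.\, \atmost{n'+1}{s} \imp \on{\prenoargs}{s}$ by case analysis on the head color, using the coinduction hypothesis in the blue branch and the outer induction hypothesis in the red branch. The guardedness observation you make explicit is exactly what makes the paper's (terser) argument go through.
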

\begin{proof}
By induction on $n$.
The case of $n = 0$ is immediate. 
The case of $n = n'+ 1$: We prove that, 
$\forall s.\, \atmost{n'+1}{s} \imp \on{\prenoargs}\, s$
by coinduction and case analysis on the head color of $s$.
The case of $\at{s}{0} = B$ follows from the coinduction
hypothesis. 
The case of $\at{s}{0} = R$ follows from the main induction 
hypothesis. 
\end{proof}

If a stream is almost always blue,
then the set of its red positions is streamless.

\begin{lemma}\label{lemma:muW_implies_SN}
$\forall s.\, \pre{s} \imp \Reds{s} ~\mbox{is streamless}$.
\end{lemma}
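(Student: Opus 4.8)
The plan is to reduce the statement to a general fact about accessibility and strong normalization, using the two characterizations already established. By Corollary~\ref{coro:pre_acc}, $\pre{s}$ is equivalent to $\wf{\succ_s}{0}$, so the hypothesis hands us accessibility of $0$ with respect to $\succ_s$. By Lemma~\ref{lemma:streamless_SN}, the conclusion ``$\Reds{s}$ is streamless'' is equivalent to $\mathrm{SN}\, \succ_s\, 0$. Hence it suffices to prove the general implication
\[
\forall x.\, \wf{\succ}{x} \imp \mathrm{SN}\, \succ\, x
\]
for an arbitrary relation $\succ$ (then instantiated at $\succ_s$ and $x = 0$), that is, that accessibility entails that every descending chain is finite. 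This direction needs no decidability, so it is unproblematic constructively; it is only the converse that would require extra assumptions.

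First I would prove this general lemma by induction on the proof of $\wf{\succ}{x}$. The inductive step unfolds the accessibility proof into $\forall m.\, x \succ m \imp \wf{\succ}{m}$, and the induction hypothesis upgrades this to $\forall m.\, x \succ m \imp \mathrm{SN}\, \succ\, m$. To establish $\mathrm{SN}\, \succ\, x$, namely $\forall \ell : \chain{\succ}{x}.\, \finite{\ell}$, I would fix a chain $\ell$ and invert it. If $\ell = \nil$, then $\finite{\nil}$ holds immediately. If $\ell = x_1\, \ell'$ with $x \succ x_1$ and $\ell' : \chain{\succ}{x_1}$, then the induction hypothesis applied at $x_1$ yields $\finite{\ell'}$, whence $\finite{x_1\, \ell'}$ by the cons-rule for finiteness.

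The only delicate point is the interplay of the three (co)inductive layers: the outer induction runs on the inductive accessibility predicate, the chains over which $\mathrm{SN}$ quantifies are coinductive, and finiteness is again inductive. The argument goes through because, for a fixed accessibility proof, strong normalization is proved uniformly for all chains, and each chain is consumed by a single inversion step that defers to the induction hypothesis at the immediate successor. No coinduction is needed here, since finiteness is built up inductively in lock-step with peeling one element off the chain. I expect this bookkeeping, rather than any genuine mathematical content, to be the main thing to get right.
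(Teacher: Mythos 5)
Your proposal is correct and follows exactly the paper's route: reduce via Corollary~\ref{coro:pre_acc} and Lemma~\ref{lemma:streamless_SN}, then invoke the general fact that accessibility implies strong normalization. The paper leaves that last fact as a one-line remark, and your induction on the accessibility proof with a single inversion of the chain is the standard, correct way to fill it in.
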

\begin{proof}
The claim follows from Corollary~\ref{coro:pre_acc}
and Lemma~\ref{lemma:streamless_SN},
since accessibility implies strong normalization. 
\end{proof}

If a stream $s$ is almost always blue, then it is not the case that $s$ is not eventually all blue.

\begin{lemma}\label{lemma:muW_implies_notnotFG}
$\forall s.\, \pre{s} \imp \neg \G\, (\neg \G\, \hblue)\, s$.
\end{lemma}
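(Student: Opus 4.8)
The plan is to assume $\pre{s}$ together with $\G\,(\neg \G\, \hblue)\, s$ and derive $\False$. I would proceed by induction on the derivation of $\pre{s}$. Unfolding it in Mendler style provides a stream predicate $X$ with $\forall s'.\, X\, s' \imp \pre{s'}$, the induction hypothesis $\forall s'.\, X\, s' \imp \neg \G\,(\neg\G\,\hblue)\, s'$, and a witness $\on{X}{s}$. The goal is then to turn these, plus $\G\,(\neg\G\,\hblue)\, s$, into a contradiction.

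The heart of the argument is an inner claim proved by \emph{coinduction}: from $\on{X}{s}$, the induction hypothesis, and $\G\,(\neg\G\,\hblue)\, s$, one derives $\G\,\hblue\, s$, i.e., that $s$ is all blue. I would prove this by inspecting the head color of $s$, which is always decidable (unlike the global question whether $s$ contains any red). In the case $s = B~s'$, the witness $\on{X}{(B~s')}$ gives $\on{X}{s'}$ and $\G\,(\neg\G\,\hblue)\,(B~s')$ gives $\G\,(\neg\G\,\hblue)\, s'$, so the coinduction hypothesis yields $\G\,\hblue\, s'$, and since the head is blue we conclude $\G\,\hblue\,(B~s')$ by a guarded corecursive step. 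In the case $s = R~s'$, the witness $\on{X}{(R~s')}$ gives $X\, s'$, whence the induction hypothesis gives $\neg\G\,(\neg\G\,\hblue)\, s'$; but $\G\,(\neg\G\,\hblue)\,(R~s')$ also gives $\G\,(\neg\G\,\hblue)\, s'$, a contradiction, so $\G\,\hblue\,(R~s')$ follows ex falso. Crucially, the red case makes no corecursive call, so guardedness is preserved. Finally, the head component of the assumed $\G\,(\neg\G\,\hblue)\, s$ supplies $\neg\G\,\hblue\, s$, which together with the $\G\,\hblue\, s$ just produced yields $\False$.

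The subtle point, and the step I expect to be the main obstacle, is that $\on{X}{s}$ alone does \emph{not} bound the number of leading blue positions (the trailing suffix of an almost-always-blue stream may be infinitely blue), so a naive induction that scans forward to the first red would loop in the all-blue case, and constructively one cannot decide whether a red ever occurs. The coinduction resolves this precisely: it builds the witness $\G\,\hblue\, s$ lazily, consuming reds through the induction hypothesis (via $X$) and blues through corecursion, so that the finite inductive structure of $\pre{s}$ (its reds) and the potentially infinite coinductive structure (its blues) are discharged by induction and coinduction respectively, with no appeal to Markov's Principle. The delicate part to get right is the guardedness of this coinduction and the lockstep peeling of the three coinductive data $\on{X}{\cdot}$, $\G\,(\neg\G\,\hblue)\,\cdot$, and the goal $\G\,\hblue\,\cdot$.
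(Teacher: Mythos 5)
Your proof is correct and takes essentially the same route as the paper's: an outer (Mendler-style) induction on $\pre{s}$ whose body runs an inner coinduction turning $\on{X}\,s$ into $\G\,\hblue\,s$, closing the red case ex falso via the induction hypothesis and then contradicting the $\neg\G\,\hblue$ component of $\G\,(\neg\G\,\hblue)\,s$. The only difference is bookkeeping: the paper generalizes over suffix positions and fixes $\forall n.\,\neg\G\,\hblue\,\sub{s}{n}$ outside the induction, whereas you peel $\G\,(\neg\G\,\hblue)$ in lockstep inside the coinduction.
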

\begin{proof}
We prove a slightly stronger statement, 
$\forall s.\,  (\forall n.\, \neg \G\, \hblue\, \sub{s}{n})
\imp \forall n.\, \pre{\sub{s}{n}}  \imp \False$,
from which the claim follows. 
For a given $s$, we assume 
$\forall n.\, \neg \G\, \hblue\, \sub{s}{n}$. 
We shall prove
$\forall n.\, \pre{\sub{s}{n}}  \imp \False$ by
induction on the proof of $\pre{\sub{s}{n}}$. 
We are given as induction hypothesis that,
$\forall n.\, X\, \sub{s}{n} \imp \False$.
We have to prove $\False$, given $\on{X}\, \sub{s}{n}$.
From our assumption, however, it suffices
to prove $\G\, \hblue\, \sub{s}{n}$.
We do so by proving
$\forall n.\, \on{X}{\sub{s}{n}} 
\imp \G\, \hblue\, \sub{s}{n}$
by coinduction using the main induction hypothesis.
\end{proof}

If the set of red positions of a stream $s$ is streamless, 
then $s$ is not infinitely often red.

\begin{lemma}\label{lemma:SN_implies_notnuU}
$\forall s.\, \Reds{s} ~\mbox{is streamless} \imp \neg\rep{s}$.
\end{lemma}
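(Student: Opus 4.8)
The plan is to reduce the statement to an abstract fact about relations and then prove that fact by a short (co)induction. By Lemma~\ref{lemma:streamless_SN}, $\Reds{s}$ is streamless iff $\mathrm{SN}\, \succ_s\, 0$, and by Corollary~\ref{coro:rep_af}, $\rep{s} \logequ \af{\succ_s}{0}$. Hence it suffices to prove, for any set $U$, any relation $\succ$ on $U$ and any $x : U$, the general implication $\mathrm{SN}\, \succ\, x \imp \neg\, \af{\succ}{x}$, and then to instantiate it at $\succ_s$ and $x = 0$. Intuitively this expresses that if every descending chain from $x$ is finite, then there can be no infinite descending chain from $x$ — the easy, constructively valid direction.

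To prove the general fact, I would first turn a proof of antifoundedness into an actual chain. Concretely, I define corecursively a function $\aseq{\succ}$ that maps any proof $p$ of $\af{\succ}{x}$ to a chain $\aseq{\succ}\, p : \chain{\succ}{x}$: unfolding $p$ yields some $m$ with $x \succ m$ together with a proof $p' : \af{\succ}{m}$, and I set $\aseq{\succ}\, p = m~(\aseq{\succ}\, p')$. By construction, $\aseq{\succ}\, p$ always unfolds to a \emph{cons}, never to $\nil$; this is the formal rendering of ``the chain is infinite''.

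Now assume $\mathrm{SN}\, \succ\, x$ and a proof $p : \af{\succ}{x}$, aiming at $\False$. Since $\aseq{\succ}\, p$ is a chain starting at $x$, the hypothesis $\mathrm{SN}\, \succ\, x$ gives $\finite{(\aseq{\succ}\, p)}$. I then derive $\False$ by induction on this finiteness proof, generalised over $x$ and $p$ so that the induction hypothesis is available for tails. In the base case the finiteness proof would be $\finite{\nil}$, which is impossible because $\aseq{\succ}\, p$ always unfolds to a cons; in the step case the finiteness proof is $\finite{(m~\ell)}$ with $\ell = \aseq{\succ}\, p'$ for the tail witness $p' : \af{\succ}{m}$, so the induction hypothesis applied to $p'$ yields $\False$.

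The only delicate point is this last induction: because $\aseq{\succ}\, p$ is a coinductively defined object while $\finite{}$ is inductive, I must carry the equation relating the chain to the antifoundedness proof through the induction (equivalently, perform inversion on the finiteness proof, using that $\aseq{\succ}\, p$ is always a cons in head position), so that the tail of a finite chain is again recognised as arising from a proof of $\af{\succ}{}$. Everything is constructive here; the converse implication would require classical reasoning or bar induction and is not claimed.
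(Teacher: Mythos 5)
Your proposal is correct and follows essentially the same route as the paper: the paper's proof is exactly the reduction via Lemma~\ref{lemma:streamless_SN} and Corollary~\ref{coro:rep_af}, followed by the one-line remark that ``strong normalization contradicts antifoundedness.'' Your corecursive construction of the chain from the $\af{\succ}{x}$ proof and the subsequent induction on its finiteness (carrying the cons-equation through, so the base case is refuted by inversion) is a correct and careful elaboration of precisely that remark.
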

\begin{proof}
The claim follows from Lemma~\ref{lemma:streamless_SN}
and Corollary~\ref{coro:rep_af},
since strong normalization contradicts antifoundedness.
\end{proof}

If it is not the case that a stream $s$ is not 
eventually all blue, then $s$ is not infinitely often red.

\begin{lemma}\label{lemma:notnotFG_implies_notnuU}
$\forall s.\, \neg \G\, (\neg \G\, \hblue)\, s
\imp \neg\rep{s}$.
\end{lemma}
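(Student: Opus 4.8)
The plan is to assume both the hypothesis $\neg \G\, (\neg \G\, \hblue)\, s$ and $\rep{s}$ and derive $\False$. Since $\neg \G\, (\neg \G\, \hblue)\, s$ is by definition $\G\, (\neg \G\, \hblue)\, s \imp \False$, it suffices to produce $\G\, (\neg \G\, \hblue)\, s$ out of $\rep{s}$. So the entire content of the lemma is the single implication $\forall s.\, \rep{s} \imp \G\, (\neg \G\, \hblue)\, s$, after which the conclusion $\neg \rep{s}$ is immediate by composing with the hypothesis.

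First I would rewrite $\rep{s}$ as $\G\, (\F\, \hred)\, s$, using the equivalence $\forall s.\, \rep{s} \logequ \G\, (\F\, \hred)\, s$ recorded in Section~\ref{sec:notinfinitelyoften}. This reduces the goal to the purely modal implication $\G\, (\F\, \hred)\, s \imp \G\, (\neg \G\, \hblue)\, s$. I would obtain this by monotonicity of the $\G$ modality in its predicate argument (a short coinduction), which in turn reduces it to the pointwise claim $\forall t.\, \F\, \hred\, t \imp \neg \G\, \hblue\, t$. That pointwise claim follows from the implication $\F\, (\lambda u.\, \neg X\, u)\, t \imp \neg \G\, X\, t$ stated in Section~\ref{sec:stream_function}, instantiated at $X := \hblue$, together with the observation that a red head excludes a blue head, i.e. $\forall t.\, \hred\, t \imp \neg \hblue\, t$ (by distinctness of the constructors $R$ and $B$); monotonicity of $\F$ then gives $\F\, \hred\, t \imp \F\, (\lambda u.\, \neg \hblue\, u)\, t$, and the two combine to yield $\F\, \hred\, t \imp \neg \G\, \hblue\, t$ as wanted.

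Every step here is routine once the equivalence $\rep{s} \logequ \G\, (\F\, \hred)\, s$ is invoked; the only genuine ingredients are the coinductive monotonicity of $\G$ and the head-level fact that $\hred$ refutes $\hblue$. The main obstacle, such as it is, is the temptation to attack $\G\, (\neg \G\, \hblue)\, s$ directly by coinduction starting from $\rep{s}$: unfolding the greatest fixed point $\repnoargs$ through the strong-until operator $\popnoargs{\repnoargs}$ forces an interior induction to locate the next red position before the coinduction hypothesis becomes usable, which is precisely the work already packaged in the equivalence $\rep{s} \logequ \G\, (\F\, \hred)\, s$. Routing through that equivalence keeps the argument short and avoids re-deriving it.
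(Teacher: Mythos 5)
Your proposal is correct and follows essentially the same route as the paper: invoke the equivalence $\rep{s} \logequ \G\, (\F\, \hred)\, s$, establish the tautology $\G\, (\F\, \hred)\, s \imp \G\, (\neg \G\, \hblue)\, s$, and conclude by contraposition. The only difference is that you spell out the proof of the tautology (monotonicity of $\G$ plus the Section~\ref{sec:stream_function} implication $\F\, (\lambda t.\, \neg X\, t)\, s \imp \neg \G\, X\, s$), which the paper leaves implicit.
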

\begin{proof}
Noticing 
$\forall s.\, \rep{s} \logequ \G\, (\F\, \hred)\, s$, 
the claim follows
by contraposition from a tautology
$\forall s.\, \G\, (\F\, \hred)\, s 
\imp \G\, (\neg \G\, \hblue)\, s$.
\end{proof}

\bigskip

We now proceed to study strength of upward implications, which are
technically more interesting than downward implications.  We know that
differences between the first three positive variants amount to LPO
(Lemma~\ref{lemma:atmost_implie_fgblue}
and~\ref{lemma:pre_implies_atmost}).
Moreover, (e) $\implies$ (d) amounts to Markov's Principle 
(Lemma~\ref{lemma:notnotFG_implies_SN}) and (f) $\implies$ (e)
to an instance of Double Negation Shift 
for a $\Sigma^0_1$-formula (Lemma~\ref{lemma:notGFred_implies_notnotFGblue}).
As immediate corollaries from Section~\ref{sec:acc},
we have that 
(d) $\implies$ (c) is equivalent
to that SN of $\succ_s$ at 0 implies accessibility of 0 with respect
to $\succ_s$ (Corollary~\ref{coro:SN_implies_muW})
and that (f) $\implies$ (d) is equivalent
to 
that non-antifoundedness of 0 with respect to ${\succ_s}$
implies SN of $\succ_s$ at 0
(Corollary~\ref{coro:nuU_implies_SN}).

\medskip

\begin{lemma}\label{lemma:atmost_implie_fgblue}
$(\forall n, s.\, \atmost{n}{s} \imp 
\F\, (\G\, \hblue)\, s)
\logequ (\forall s.\, \F\, \hred\, s \vee \G\, \hblue\, s)$.
\end{lemma}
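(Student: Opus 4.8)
The statement is an equivalence, so I would prove the two implications separately. I expect $(\Leftarrow)$, deriving the implication from LPO, to be routine, whereas the real work is the reversal $(\imp)$, where from an arbitrary stream I must engineer a provably boundedly-red one whose ``eventually all blue'' conclusion settles an instance of LPO.

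For $(\Leftarrow)$, assume $\forall s.\, \F\, \hred\, s \vee \G\, \hblue\, s$ and prove $\forall n, s.\, \atmost{n}{s} \imp \F\, (\G\, \hblue)\, s$ by induction on $n$. The base case $n = 0$ is vacuous, since $\atmostnoargs{0}$ is everywhere false and $\atmost{0}{s}$ gives $\False$ by inversion. For $n = n'+1$, assume $\atmost{n'+1}{s}$ and apply LPO to $s$. If $\G\, \hblue\, s$, then $\F\, (\G\, \hblue)\, s$ holds immediately by the base rule of $\F$. If $\F\, \hred\, s$, I locate the first red position $\ell$ of $s$ by bounded search below a witness supplied by $\F\, \hred\, s \logequ \exists n.\, \stof\, s\, n = R$, so that the positions below $\ell$ are blue and $\at{s}{\ell} = R$. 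Inverting the coinductive definition of $\atmostnoargs{}$ along the blue prefix keeps the index at $n'+1$, and the single inversion at the red head drops it to $n'$, yielding $\atmost{n'}{\sub{s}{\ell+1}}$. The induction hypothesis then gives $\F\, (\G\, \hblue)\, \sub{s}{\ell+1}$, i.e. $\G\, \hblue\, \sub{s}{\ell+1+m}$ for some $m$, whence $\F\, (\G\, \hblue)\, s$.

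For $(\imp)$, assume $\forall n, s.\, \atmost{n}{s} \imp \F\, (\G\, \hblue)\, s$ and, for an arbitrary $s$ with $f = \stof\, s$, decide $\F\, \hred\, s \vee \G\, \hblue\, s$. The key construction is a stream $s' = k\, s$ that copies the blue prefix of $s$, emits one red at the first red of $s$ (if any), and is all blue afterwards, defined corecursively by $k\, (B~s) = B~(k~s)$ and $k\, (R~s) = R~B^{\infty}$ with $B^{\infty}$ the all-blue stream. By construction $s'$ carries at most one red, so $\atmost{2}{s'}$ (provable by coinduction, or via the counting characterization of $\atmostnoargs{}$). The hypothesis at $n = 2$ then yields $\F\, (\G\, \hblue)\, s'$, which unfolds by $\F\, X\, s \logequ \exists n.\, X\, (s|_n)$ to an explicit bound $N$ with $\G\, \hblue\, \sub{s'}{N}$. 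With $N$ in hand I inspect the decidable finite segment $f\, 0, \ldots, f\, (N-1)$: if some entry is red, then $\F\, \hred\, s$; if all are blue, I claim $\G\, \hblue\, s$, checking each position of $f$ separately. Positions below $N$ are blue by inspection; for a position $n \geq N$ I decide $f\, n$, and a red there would make the first red of $s$ occur at a position in $[N, n]$, forcing a red of $s'$ past $N$ and contradicting $\G\, \hblue\, \sub{s'}{N}$, so $f\, n = B$.

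The main obstacle is the reversal $(\imp)$: designing $s'$ so that it is provably boundedly red yet its all-blue-tail position $N$ carries enough information to resolve LPO for $s$. The decisive point is that $\F\, (\G\, \hblue)\, s'$ is genuinely positive, so the hypothesis delivers a computable $N$; the bounded search exploiting it, and the argument that no red of $s$ can escape past $N$, are then straightforward but must be arranged so that every case distinction made is decidable.
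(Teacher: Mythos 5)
Your proof is correct and follows essentially the same route as the paper: the forward direction uses exactly the same corecursive construction (copy the blue prefix, emit one red at the first red, then all blue), establishes $\atmostnoargs{2}$ for it, and extracts a decidable bound from the resulting positive $\F\,(\G\,\hblue)$ statement; the backward direction is the same induction on $n$ with an LPO case split and a finite descent to the first red. The only differences are presentational (you handle the descent to the first red and the final all-blue check by explicit bounded arithmetic where the paper uses an inner induction on $\F\,\hred\,s$ and a coinductive lemma).
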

\begin{proof}
($\imp$): 
Define $f: \str \rar \str$ by corecursion
\[
f~(B~s) = B~(f~s)
\qquad
f~(R~s) = R~B^{\infty}
\]
so that $f~s$ contains (exactly) one red if and only if $s$ contains 
at least one red. 
We have that, $\forall s.\, \atmost{2}{(f\, s)}$, proved
by coinduction and case analysis on the head color of $f\, s$.
By our assumption, 
we have that, $\forall s.\, \F\, (\G\, \hblue)\, (f~s)$.
The proof of 
$\F\, (\G\, \hblue)\, (f~s)$ tells us
whether $f~s$ contains red or not,
deciding whether $s$ is eventually red, $\F\, \hred\, s$
or all blue, $\G\, \hblue\, s$, 
as required. 

($\Leftarrow$):
We prove that, 
$\forall n, s.\, \atmost{n}{s} \imp 
\F\, (\G\, \hblue)\, s$ by induction on $n$,
assuming 
$\forall s.\, \F\ \hred\, s \vee \G\, \hblue\, s$. 
The case of $n = 0$ is immediate. 
The case of $n = n' + 1$: Suppose $\atmost{n'+1}{s}$.
By our assumption,
we have either $\F\ \hred\, s$ or $\G\, \hblue\, s$. 
The latter case immediately yields $\F\, (\G\, \hblue)\, s$. 
For the former case, we prove $\forall s.\, 
\F\, \hred\, s \rar \atmost{n'+1}{s} \rar \F\ (\G\, \hblue)\, s$
by induction on $\F\, \hred\, s$ and case analysis on the head color of $s$,
using the main induction hypothesis. 
\end{proof}

\begin{lemma}\label{lemma:pre_implies_atmost}
$(\forall s.\, \pre{s} \imp \exists n.\, \atmost{n}{s})
\logequ (\forall s.\, \F\, \hred\, s \vee \G\, \hblue\, s)$.
\end{lemma}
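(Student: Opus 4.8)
The plan is to reduce the statement to Lemma~\ref{lemma:omega_closure}, exploiting the identification $F^n \logequ \atmostnoargs{n}$ recorded just after it. Since $F^{\omega}\, s$ is by definition $\exists n.\, F^n\, s$, this identification rewrites $\exists n.\, \atmost{n}{s}$ as $F^{\omega}\, s$, so the left-hand side of our equivalence becomes $\forall s.\, \pre{s} \imp F^{\omega}\, s$. As Lemma~\ref{lemma:omega_closure} already identifies $\forall s.\, \onnoargs{F^{\omega}}\, s \imp F^{\omega}\, s$ with LPO, it suffices to establish the purely constructive equivalence
\[
(\forall s.\, \pre{s} \imp F^{\omega}\, s) \logequ (\forall s.\, \on{F^{\omega}}{s} \imp F^{\omega}\, s),
\]
after which the claim follows by chaining with Lemma~\ref{lemma:omega_closure}.

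For this equivalence I would first record that $F^{\omega}\, s \imp \pre{s}$: indeed $F^n \logequ \atmostnoargs{n}$ and Lemma~\ref{lemma:atmost_implies_pre} give $F^n\, s \imp \pre{s}$ for each $n$, and hence $F^{\omega}\, s \imp \pre{s}$. For the forward direction, assume $\forall s.\, \pre{s} \imp F^{\omega}\, s$ and suppose $\on{F^{\omega}}{s}$. Monotonicity of $\onnoargs{X}$ in $X$ together with $F^{\omega} \imp \prenoargs$ yields $\on{\prenoargs}{s}$, and the Park-style rule for $\prenoargs$ gives $\pre{s}$; the assumption then delivers $F^{\omega}\, s$. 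For the backward direction, assume $\forall s.\, \on{F^{\omega}}{s} \imp F^{\omega}\, s$, i.e.\ that $F^{\omega}$ is a pre-fixed point of $\onnoargs{}$; since $\prenoargs$ is the least pre-fixed point, its induction (leastness) principle gives $\pre{s} \imp F^{\omega}\, s$ directly.

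The only genuinely delicate points are the appeal to leastness in the backward direction and, dually, the monotonicity step in the forward direction: both hinge on treating $\prenoargs$ as the least pre-fixed point of the monotone operator $\onnoargs{}$, which is exactly the reading supported by the Mendler/Park rules. It is worth stressing that one cannot shortcut the argument by claiming that $\onnoargs{}$ commutes with the $\omega$-union defining $F^{\omega}$ (which would make $F^{\omega}$ a fixed point outright): that commutation is precisely the content that Lemma~\ref{lemma:omega_closure} pins down as equivalent to LPO, so all non-constructive strength stays localized there and is imported wholesale. A direct proof bypassing Lemma~\ref{lemma:omega_closure} is also available---for $(\Leftarrow)$ by induction on the proof of $\pre{s}$, case-splitting $\F\, \hred\, s \vee \G\, \hblue\, s$ via LPO and, in the red case, extracting a bound by an inner induction on $\F\, \hred\, s$; for $(\imp)$ by feeding the stream $B^{\ell}R^{\ell}B^{\infty}$ produced by the functions $f,g$ of Lemma~\ref{lemma:omega_closure} to the hypothesis and reading off the first-red position from the supplied bound---but the reduction above is shorter and keeps the LPO content in one place.
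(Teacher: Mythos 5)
Your proof is correct, and one half of it coincides with the paper's argument: for the direction from $\forall s.\, \pre{s} \imp \exists n.\, \atmost{n}{s}$ to LPO, the paper does exactly what you do --- assume $\on{F^{\omega}}{s}$, use $F^n \logequ \atmostnoargs{n}$, Lemma~\ref{lemma:atmost_implies_pre} and monotonicity of $\onnoargs{X}$ in $X$ to get $\on{\prenoargs}{s}$, hence $\pre{s}$, hence $F^{\omega}\, s$, and then invoke Lemma~\ref{lemma:omega_closure}. Where you genuinely diverge is the converse direction. The paper proves LPO $\imp$ ($\forall s.\, \pre{s} \imp \exists n.\, \atmost{n}{s}$) directly, by induction on the proof of $\pre{s}$, splitting via LPO into $\G\, \hblue\, s$ (take $n=1$) and $\F\, \hred\, s$ (an inner induction on the proof of $\F\, \hred\, s$ with a case analysis on the head color); this direction of the paper's proof does not touch Lemma~\ref{lemma:omega_closure} or the $F^n$ hierarchy at all. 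You instead route this direction through Lemma~\ref{lemma:omega_closure} as well: LPO gives closure of $\onnoargs{}$ at $F^{\omega}$, and leastness of $\prenoargs$ over pre-fixed points then yields $\pre{s} \imp F^{\omega}\, s$. Both are sound. Your reduction is more uniform and arguably cleaner --- all the LPO content sits in Lemma~\ref{lemma:omega_closure} and the rest is lattice-theoretic boilerplate about least fixed points --- at the cost of making the whole lemma depend on the equivalence $\forall n, s.\, F^n\, s \logequ \atmost{n}{s}$, which the paper asserts without proof (its own use of that equivalence is confined to the other direction). The paper's concrete double induction buys a self-contained argument for the LPO-to-bound direction that would survive even if one did not want to develop the $F^n$ machinery. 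Your closing remark correctly identifies that one cannot pretend $F^{\omega}$ is a fixed point outright; that is indeed precisely the LPO-equivalent content isolated by Lemma~\ref{lemma:omega_closure}.
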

\begin{proof}
$(\imp)$: We prove $\forall s.\, \pre{s} \imp \exists n.\, \atmost{n}{s}$
by induction on the proof of $\pre{s}$,
assuming $\forall s.\, \F\, \hred\, s \vee \G\, \hblue\, s$. 
We have that, for some stream predicate $X$, $\forall s'.\, X\, s' \imp
\pre{s'}$ and 
$\forall s'.\, X\, s' \imp \exists n.\,\atmost{n}{s'}$
and $\on{X}\, s$.
We have to prove
that there exists some $n$ such that $\atmost{n}{s}$.
By our assumption, 
we have either $\G\, \hblue\, s$ or $\F\, \hred\, s$.
The former case follows immediate by coinduction by taking $n = 1$. 
The latter case is closed by the auxiliary lemma: 
$\forall s'.\, \F\, \hred\, s' \imp
\on{X}\, s' \imp \exists n.\, \atmost{n}{s'}$ proved by induction on 
the proof of $\F\, \hred\, s'$ and case analysis on the head color of $s'$.
The case of $\at{s'}{0} = B$ follows from the induction hypothesis. 
The case of $\at{s'}{0} = R$ follows from the main induction 
hypothesis, $\forall s'.\, X\, s' \imp \exists n.\,\atmost{n}{s'}$.

$(\Leftarrow)$: We prove that, 
$\forall s.\, \onnoargs{F^{\omega}}\, s \imp F^{\omega}\, s$,
assuming $\forall s.\, \pre{s} \imp \exists n.\, \atmost{n}{s}$,
where $F^{\omega}$ was defined in Section~\ref{sec:fixedpoint}.
Then the case follows from Lemma~\ref{lemma:omega_closure}.
Suppose $\on{F^{\omega}}{s}$. By 
Lemma~\ref{lemma:atmost_implies_pre} and
the monotonicity of $\onnoargs{X}$ on $X$,
we have $\on{\prenoargs}\, s$, which 
yields $\pre{s}$ by definition. 
From our assumption and the equivalence
between $F^n$ and $\atmostnoargs{n}$,
we conclude $\F^{\omega}\, s$, as required. 
\end{proof}

The following claim is a corollary from
Corollary~\ref{coro:pre_acc}
and Lemma~\ref{lemma:streamless_SN}.

\begin{corollary}\label{coro:SN_implies_muW}
$(\forall s.\, \Reds{s}~\mbox{is streamless}
\imp \pre{s})
\logequ 
(\forall s.\, \mathrm{SN} \succ_s 0 \imp \wf{\succ_s}{0})$
\end{corollary}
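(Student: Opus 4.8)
The plan is to derive the stated equivalence by pointwise rewriting. Both sides have the shape $\forall s.\, P(s) \imp Q(s)$, and the two results cited in the preceding sentence supply exactly the equivalences that convert the antecedent and the consequent on the left into their counterparts on the right. So the whole content of the corollary is a substitution of logically equivalent subformulas under a universal quantifier.

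First I would fix an arbitrary stream $s$. Lemma~\ref{lemma:streamless_SN}, in the form $\forall s.\, \mathrm{SN} \succ_s 0 \logequ \Reds{s}~\mbox{is streamless}$, gives the equivalence of the two antecedents at $s$; Corollary~\ref{coro:pre_acc}, in the form $\forall s.\, \pre{s} \logequ \wf{\succ_s}{0}$, gives the equivalence of the two consequents at $s$. Instantiating both at this $s$ and substituting, I obtain that the implication from $\Reds{s}$ being streamless to $\pre{s}$ is logically equivalent to the implication $\mathrm{SN} \succ_s 0 \imp \wf{\succ_s}{0}$, using only the elementary (intuitionistically valid) fact that replacing both the antecedent and the consequent of an implication by equivalent formulas yields an equivalent implication.

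Finally, since this equivalence of inner implications holds uniformly for every $s$, I would reabstract the universal quantifier and conclude that $(\forall s.\, \Reds{s}~\mbox{is streamless} \imp \pre{s})$ is equivalent to $(\forall s.\, \mathrm{SN} \succ_s 0 \imp \wf{\succ_s}{0})$, which is precisely the claim.

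I do not expect a genuine obstacle here: all the real work already lives in the two cited results (the colist/chain manipulations behind Lemma~\ref{lemma:streamless_SN} and the accessibility induction behind Lemma~\ref{lemma:pre_acc}), and this corollary merely repackages them. The one point meriting a moment's care is that both equivalences are stated in $\forall s$ form, so they are available for exactly the same stream $s$ that indexes the relation $\succ_s$; once that alignment is observed, the corollary follows immediately and no extra hypotheses (classical or otherwise) are needed.
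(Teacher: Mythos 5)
Your proposal is correct and matches the paper exactly: the paper states this corollary as an immediate consequence of Corollary~\ref{coro:pre_acc} and Lemma~\ref{lemma:streamless_SN}, i.e., precisely the pointwise substitution of equivalent antecedents and consequents under the universal quantifier that you carry out. Nothing further is needed.
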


\begin{lemma}\label{lemma:SN_implies_pre}
$(\forall s.\, \F\, \hred\, s \vee \G\, \hblue\, s) 
\imp (\forall s.\, \Reds{s}~\mbox{is streamless}\imp \pre{s})$
\end{lemma}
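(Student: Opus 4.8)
The plan is to reduce the statement to a property of the relation $\succ_s$ and then establish that property using the hypothesis, which is exactly LPO. By Corollary~\ref{coro:SN_implies_muW}, the conclusion $\forall s.\, \Reds{s}~\mbox{is streamless} \imp \pre{s}$ is equivalent to $\forall s.\, \mathrm{SN} \succ_s 0 \imp \wf{\succ_s}{0}$. So it suffices to prove, assuming $\forall s.\, \F\, \hred\, s \vee \G\, \hblue\, s$, that $\forall s.\, \mathrm{SN} \succ_s 0 \imp \wf{\succ_s}{0}$, i.e.\ that strong normalization of $\succ_s$ at $0$ implies accessibility of $0$. The difficulty is the usual constructive gap between strong normalization (every chain finite) and accessibility (the inductive notion), and LPO is precisely the ingredient that closes it here.

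First I would use the hypothesis to build, for each position $n$, the maximal descending $\succ_s$-chain from $n$. Define $\mathit{mc}\, n : \chain{{\succ_s}}{n}$ by corecursion: using the assumed decision $\F\, \hred\, \sub{s}{n} \vee \G\, \hblue\, \sub{s}{n}$, in the all-blue case set $\mathit{mc}\, n = \nil$, and in the eventually-red case let $m$ be the position immediately after the first red at or beyond $n$ (which the inductive proof of $\F\, \hred\, \sub{s}{n}$ delivers, together with a witness of $n \succ_s m$) and set $\mathit{mc}\, n = m~(\mathit{mc}\, m)$. This definition is productive and yields a genuine $\succ_s$-chain starting at $n$.

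Next, from $\mathrm{SN} \succ_s 0$, which asserts that every chain from $0$ is finite, I obtain $\finite{\mathit{mc}\, 0}$ by instantiating at this particular chain. I would then prove $\forall n.\, \finite{\mathit{mc}\, n} \imp \wf{\succ_s}{n}$ by induction on the derivation of $\finite{\mathit{mc}\, n}$. In the base case $\mathit{mc}\, n = \nil$ the all-blue branch was taken, so $\sub{s}{n}$ has no red and hence $n$ has no $\succ_s$-successor, making $\wf{\succ_s}{n}$ hold vacuously. In the step case $\mathit{mc}\, n = m~(\mathit{mc}\, m)$ with $n \succ_s m$ and $\finite{\mathit{mc}\, m}$, the induction hypothesis gives $\wf{\succ_s}{m}$; since $\succ_s$ is deterministic, $m$ is the unique successor of $n$, so the universal quantifier in the defining clause of $\wf{\succ_s}{n}$ collapses to this single case and $\wf{\succ_s}{n}$ follows. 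Specializing at $n = 0$ yields $\wf{\succ_s}{0}$, completing the proof via the corollary.

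The main obstacle is the interplay between the two uses of the classical input hidden in the single hypothesis: LPO is needed once to decide at each node whether to stop the chain, and once (as bounded search) to locate the first red and hence the successor $m$. The decisive structural point is that $\succ_s$ is deterministic, so that finiteness of the one constructed maximal chain is already enough to establish accessibility; without determinism one would have to reason about all chains simultaneously, and induction on the finiteness of a single chosen chain would not suffice.
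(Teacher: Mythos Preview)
Your argument is correct and follows essentially the same strategy as the paper: use LPO to corecursively construct the canonical maximal descending object from a given start, invoke the streamlessness/SN hypothesis to conclude it is finite, and then induct on that finiteness proof to obtain the target.

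The packaging differs in a useful way. The paper works directly at the stream level: it builds the increasing colist of all red positions of $s$ (via a function $f : \str \to \nat \to \colist\,\Reds{s}$ whose case split is the LPO decision on the current suffix), uses streamlessness of $\Reds{s}$ to get finiteness, and then proves by induction on $\finite{f\,s\,n}$ that $\pre{s}$ holds, establishing $\on{\prenoargs}{s}$ by an auxiliary coinduction in each case. You instead factor through Corollary~\ref{coro:SN_implies_muW} to reduce to $\mathrm{SN}\,\succ_s\,0 \Rightarrow \wf{\succ_s}{0}$, build $\mathit{mc}\,n : \chain{{\succ_s}}{n}$, and in the induction step exploit that $\succ_s$ is deterministic so that accessibility of the single successor already yields $\wf{\succ_s}{n}$. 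Your route is a bit more modular (it reuses the $\succ_s$ characterizations and isolates determinism as the reason one chain suffices), while the paper's direct argument avoids the detour through $\succ_s$ but pays for it with the extra coinduction inside the inductive step. The underlying mathematics is the same; your maximal chain is the paper's colist of red positions shifted by one.
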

\begin{proof}
Assume $\forall s.\, \F\, \hred\, s \vee \G\, \hblue\, s$ (LPO) 
and let $s$ be given. Based on LPO we can build an increasing colist
of all red positions in $s$. 
Recall that $\sub{s}{n}$ denotes the suffix of $s$ at $n$.
Define a function $f: \str  \to \nat \to \colist\, \Reds{s}$ corecursively by
($m$ will be justified below):
\[
\begin{array}{ll}
f\,s\, n = \nil                             & \mbox{if}~\G\, \hblue\, s\\
f\,s\, n = (n+m)~(f\,\sub{s}{m+1}\, (n+m+1)) & \mbox{if}~\F\, \hred\, s,~
                                      \mbox{where $m$ is the first red position in $s$}
\end{array}
\]
The computation of $f$ essentially depends on LPO. Observe that,
in the second clause, 
the first red position in $s$ exists by $\F\, \hred\, s$
(cf. $\forall s.\, \F\, \hred\, s  \Leftrightarrow  \exists n.\, \stof\, s\, n = R$
in Section~\ref{sec:stream_function}).
Clearly, $f\,s\,0$ is the colist of all red positions in $s$.
If $\Reds{s}$ is streamless, then $f\,s\,0$ is finite: $\finite{f\,s\,0}$.
By induction on the proof of $\finite{f\, s\, n}$ one proves that $\finite{f\,s\,n}$
implies $\pre\, s$. In the base case $f\, s\, n = \nil$, we have  $\G\, \hblue\, s$,
and therefore $\on{\pre{}}{s}$ by coinduction. 
In the step case of  $f\, s\, n$ being a cons-colist, we have $\F\, \hred\, s$. 
We prove $\on{\pre{}}{s}$ by coinduction again, 
this time using the induction hypothesis that $\finite{f\,\sub{s}{m+1}\, (n+m+1)}$ implies $\pre\, \sub{s}{m+1}$. 
In both cases, we get that $\pre\, s$ by definition.
\end{proof}

\begin{lemma}\label{lemma:notnotFG_implies_SN}
$(\forall s.\, \neg \G\, (\neg \G\, \hblue)\, s
\imp \Reds{s} ~\mbox{is streamless})
\logequ (\forall s.\, \neg \G\, \hblue\, s
\imp \F\, \hred\, s)$
\end{lemma}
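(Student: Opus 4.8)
The plan is to prove both implications of the equivalence, reading the left-hand side as ``(e) $\imp$ (d)'' and the right-hand side as Markov's Principle (MP) in its stream form $\forall s.\,\neg\G\,\hblue\,s\imp\F\,\hred\,s$.

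For the ($\Leftarrow$) direction I would assume MP, fix a stream $s$ with $\neg\G\,(\neg\G\,\hblue)\,s$, and show that an arbitrary duplicate-free colist $\ell:\colist\,\Reds{s}$ is finite. First I define a bit-valued function $u:\nat\rar\colour$ by $u\,k=B$ if $\ell$ still exposes a cons after $k$ unfoldings (equivalently, $\ell$ has more than $k$ elements) and $u\,k=R$ otherwise; this is computable by unfolding $\ell$ at most $k$ times, so $u$ is well defined. If $u$ were constantly $B$, then $\ell$ would expose a cons at every depth, hence list infinitely many pairwise distinct red positions; since among any $n+1$ distinct naturals one is $\ge n$, this yields $\forall n.\,\exists m\ge n.\,\at{s}{m}=R$, i.e.\ $\G\,(\F\,\hred)\,s$, which equals $\rep{s}$. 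By $\rep{s}\logequ\G\,(\F\,\hred)\,s$ together with the tautology $\G\,(\F\,\hred)\,s\imp\G\,(\neg\G\,\hblue)\,s$ used in Lemma~\ref{lemma:notnotFG_implies_notnuU}, this contradicts $\neg\G\,(\neg\G\,\hblue)\,s$. Therefore $\neg(\forall k.\,u\,k=B)$, and MP gives $\exists k.\,u\,k=R$, i.e.\ $\ell$ has at most $k$ elements; finiteness $\finite{\ell}$ then follows by induction on $k$.

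For the ($\imp$) direction I would assume (e)$\imp$(d) and prove MP. Given $s$ with $\neg\G\,\hblue\,s$, I build the ``scanner'' stream $s'=f\,s$, where $f:\str\rar\str$ is defined corecursively by $f\,(B~t)=R~(f\,t)$ and $f\,(R~t)=B^{\infty}$ (with $B^{\infty}$ the all-blue stream of Section~\ref{sec:fixedpoint}); thus $s'$ is red exactly on the positions strictly before the first red of $s$, and all blue afterwards. The key point is that $s'$ satisfies (e): by chasing the corecursion one proves $\F\,(\G\,\hblue)\,s'\logequ\F\,\hred\,s$, whence, using the equivalence $\neg\G\,(\neg\G\,\hblue)\,s'\logequ\neg\neg\F\,(\G\,\hblue)\,s'$ noted for variant (e), one gets $\neg\G\,(\neg\G\,\hblue)\,s'\logequ\neg\neg\F\,\hred\,s\logequ\neg\G\,\hblue\,s$, the last step using decidability of color in the form $\neg\forall k\,\neg P(k)\logequ\neg\neg\exists k\,P(k)$ with $P(k)=(\at{s}{k}=R)$. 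Since we assumed $\neg\G\,\hblue\,s$, we obtain $\neg\G\,(\neg\G\,\hblue)\,s'$, and the hypothesis makes $\Reds{s'}$ streamless. Now enumerate the red prefix of $s'$ as a duplicate-free colist over $\Reds{s'}$, e.g.\ $e\,0\,s'$ for $e:\nat\rar\str\rar\colist\,\nat$ with $e\,k\,(R~t)=k~(e\,(k+1)\,t)$ and $e\,k\,(B~t)=\nil$. Streamlessness forces $\finite{(e\,0\,s')}$; by induction on this finiteness proof the colist must reach its $\nil$ clause, i.e.\ $\at{s'}{k}=B$ for some $k$; and a blue in $s'$ witnesses, again by an induction along $f$, a red in $s$, giving $\F\,\hred\,s$.

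The main obstacle is the verification that the constructed $s'$ genuinely satisfies (e) in the ($\imp$) direction: the equivalence $\F\,(\G\,\hblue)\,s'\logequ\F\,\hred\,s$ and the subsequent passage through the two negations must be threaded carefully through decidability of color, since this is the only non-routine step and the whole reduction hinges on it. The remaining ingredients---the corecursive definitions of $f$, $u$ and $e$, the ``blue in $s'$ implies red in $s$'' lemma, and the induction turning $u\,k=R$ into $\finite{\ell}$---are straightforward inductions and coinductions tracking the respective guarded recursions, and the single genuine appeal to non-constructive reasoning is the one application of MP in the ($\Leftarrow$) direction.
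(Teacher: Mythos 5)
Your proof is correct and follows essentially the same route as the paper: in the ($\imp$) direction you use the identical prefix-complement stream $f$ (with $f\,(B~t)=R~(f\,t)$ and $f\,(R~t)=B^{\infty}$) and extract the witness for $\F\,\hred\,s$ from finiteness of the colist of red positions of $f\,s$, while your length-testing function $u$ in the ($\Leftarrow$) direction is exactly $\stof$ applied to the paper's colist-to-stream conversion ($\nil\mapsto R^{\infty}$, cons $\mapsto B\cdot$), with MP invoked at the same point. The only deviations are in the bookkeeping: you work with raw duplicate-free colists rather than routing through the $\SN\,\succ_s\,0$ characterization of streamlessness (which is why you need the small pigeonhole step that $n+1$ distinct naturals contain one $\geq n$), and you establish $\neg\,\G\,(\neg\,\G\,\hblue)\,(f\,s)$ by double-negation algebra on $\F\,(\G\,\hblue)\,(f\,s)\logequ\F\,\hred\,s$ where the paper proves $\G\,(\neg\,\G\,\hblue)\,(f\,s)\imp\G\,\hblue\,s$ directly by coinduction.
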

\begin{proof}
($\imp$):
Define a function $f:\str \rar \str$ by corecursion by
\[
f\, (R~s) = B^{\infty}
\qquad
f\, (B~s) = R~(f\, s)
\]
so that $f\, s$ is red until the first occurrence of red
in $s$ is encountered, from where $f\, s$ becomes all blue.

For any given $s$, we assume $\neg \G\, \hblue\, s$.
We have to prove $\F\, \hred\, s$.
Firstly, we prove $\neg \G\, (\neg \G\, \hblue)\, (f\, s)$.
It suffices to prove 
$\forall s.\, \G\, (\neg \G\, \hblue)\, (f\, s) \imp \G\, \hblue\, s$.
We do so by coinduction and case analysis on the head color of $s$.
The case of $\at{s}{0} = R$: This is impossible as
we then have $\G\, \hblue\, (f\, s)$, contradicting
the assumption $\G\, (\neg \G\, \hblue)\, (f\, s)$.
The case of $\at{s}{0} = B$:
From the assumption $\G\, (\neg \G\, \hblue)\, (f\, s)$, 
it follows that, $\G\, (\neg \G\, \hblue)\, \sub{(f\,s)}{1}$. 
By the coinduction hypothesis, we obtain $\G\, \hblue\, \sub{s}{1}$,
hence $\G\, \hblue\, s$.

Applying our assumption,
$\forall s.\, \neg \G\, (\neg \G\, \hblue)\, s
\imp \Reds{s} ~\mbox{is streamless}$, to 
$\neg \G\, (\neg \G\, \hblue)\, (f\, s)$ yields that
${\succ_{(f\, s)}}$ is strongly normalizing 
at 0 by Lemma~\ref{lemma:streamless_SN}.
Below we prove $\F\, \hred\, s$, 
assuming $\mathrm{SN} \succ_{(f\, s)} 0$,
which completes the proof.

Define a function $g:\nat \rar \colist\, \nat$ by recursion by
\[\begin{array}{ll}
g\, n = (n+1)~(g\,(n+1))  & \mbox{if} ~\at{(f\,s)}{n} = R\\
g\, n = \nil              & \mbox{if} ~\at{(f\, s)}{n} = B
\end{array}\]
As $g\, 0$ is a chain in $\succ_{(f\, s)}$ starting at 0, i.e., 
$g\, 0 :  \chain{{\succ_{(f\, s)}}}{0}$,
by our assumption $g\, 0$ is finite. 
By construction of $g$, we have $\at{(f\,s)}{n} = B$,
where $n$ is the length of $g\, 0$.
(As $g\, 0$ is finite, its length is welldefined.)
By construction of $f$, we now have $\at{s}{n} = R$,
which yields $\F\, \hred\, s$, as required. 

($\Leftarrow$):
For any given $s$, we assume $\neg \G\, (\neg \G\, \hblue)\, s$.
We have to prove, for any given $\ell : \chain{{\succ_s}}{0}$,
$\ell$ is finite. 

Define a function $f: \colist\, \nat \rar \str$ by corecursion by
\[
f\, \nil = R^{\infty}
\qquad
f\, (n~\ell') = B~ (f\, \ell')
\]
By definition of $f$,
we have that, $
\forall \ell':\chain{{\succ_s}} 0.\, 
\G\, \hblue\, (f\,\ell') \imp \G\, (\neg \G\, \hblue)\, s$,
proved by coinduction. 
Hence from the assumption $\neg \G\, (\neg \G\, \hblue)\, s$,
we are entitled to conclude $\neg (\G\, \hblue)\, (f\, \ell)$.
By Markov's Principle it follows that,
$\F\, \hred\, (f\, \ell)$.
However this means that $\finite{\ell}$, which completes the proof.
\end{proof}

\begin{lemma}\label{lemma:notGFred_implies_notnotFGblue}
$(\forall s.\, \neg \G\,(\F\, \hred)\, s 
\imp \neg \G\, (\neg \G\, \hblue)\, s)
\logequ 
(\forall s.\, \G\, (\neg\neg \F\, \hred)\, s 
\imp \neg \neg \G\ (\F\, \hred)\, s)$.
\end{lemma}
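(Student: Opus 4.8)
The plan is to recognize that, up to an elementary intuitionistic rearrangement, both sides of the equivalence are the \emph{same} statement, once one observes that the two hypotheses $\G\,(\neg\G\,\hblue)\,s$ and $\G\,(\neg\neg\F\,\hred)\,s$ are constructively interchangeable. Writing $P(s) := \G\,(\F\,\hred)\,s$ and $Q(s) := \G\,(\neg\G\,\hblue)\,s$, the left-hand side reads $\forall s.\, \neg P(s) \imp \neg Q(s)$; by the pointwise tautology $(\neg P \imp \neg Q) \logequ (Q \imp \neg\neg P)$ this is equivalent to $\forall s.\, Q(s) \imp \neg\neg P(s)$, that is, $\forall s.\, \G\,(\neg\G\,\hblue)\,s \imp \neg\neg\G\,(\F\,\hred)\,s$. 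The right-hand side is $\forall s.\, \G\,(\neg\neg\F\,\hred)\,s \imp \neg\neg\G\,(\F\,\hred)\,s$. Both carry the \emph{same} conclusion $\neg\neg P(s)$, so the whole claim reduces to matching the two hypotheses.

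The key step is a \textbf{color-swap equivalence}: for every stream $t$, $\neg\G\,\hblue\,t \logequ \neg\neg\F\,\hred\,t$. I would derive it from the already-established equivalence $\G\,(\lambda u.\,\neg X\,u)\,t \logequ \neg\F\,X\,t$, instantiated at $X := \hred$. Since a head is either red or blue, these being mutually exclusive and decidable, $\neg\hred\,u \logequ \hblue\,u$ holds at every suffix; because $\G\,Y\,t \logequ \forall n.\,Y\,(\sub{t}{n})$ the modality respects this pointwise equivalence, giving $\G\,\hblue\,t \logequ \neg\F\,\hred\,t$, and negating both sides yields $\neg\G\,\hblue\,t \logequ \neg\neg\F\,\hred\,t$. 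Lifting this once more through $\G$, again via $\G\,Y\,s \logequ \forall n.\,Y\,(\sub{s}{n})$, upgrades it to the equivalence of the two hypotheses, $\G\,(\neg\G\,\hblue)\,s \logequ \G\,(\neg\neg\F\,\hred)\,s$.

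With the hypotheses identified, both directions are routine. For $(\imp)$, assume the left-hand side together with $\G\,(\neg\neg\F\,\hred)\,s$; the latter gives $Q(s)$, and to prove $\neg\neg P(s)$ I assume $\neg P(s)$, apply the left-hand side to obtain $\neg Q(s)$, and contradict $Q(s)$. For $(\Leftarrow)$, assume the right-hand side together with $\neg P(s)$; to prove $\neg Q(s)$ I assume $Q(s)$, rewrite it as $\G\,(\neg\neg\F\,\hred)\,s$, apply the right-hand side to get $\neg\neg P(s)$, and contradict $\neg P(s)$. I expect no genuine obstacle: the entire content sits in the color-swap equivalence, whose only non-triviality is the appeal to decidability of the two-color head predicate (to convert $\neg\hred$ into $\hblue$); everything else is bookkeeping with $\neg\neg$ and the $\forall$-reading of $\G$.
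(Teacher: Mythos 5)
Your proposal is correct and follows essentially the same route as the paper: both reduce the claim to the pointwise equivalence $\neg\G\,\hblue\,t \logequ \neg\neg\F\,\hred\,t$ (via $\G\,(\lambda u.\,\neg X\,u) \logequ \neg\F\,X$ and decidability of the head color), lift it through $\G$ to identify the two hypotheses, and finish by purely propositional manipulation of the negations. The only cosmetic difference is that your direct use of the tautology $(\neg P \imp \neg Q) \logequ (Q \imp \neg\neg P)$ sidesteps the stability chain $\neg\neg\,\G\,(\neg\neg\F\,\hred)\,s \logequ \G\,(\neg\neg\F\,\hred)\,s$ that the paper derives before taking contrapositions.
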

\begin{proof}
For any given $s$, we have
\[
\neg\neg \G\, (\neg\neg \F\, \hred)\, s
\logequ
\neg\neg\neg \F\, (\neg\F\, \hred)\, s
\logequ
\neg \F\, (\neg\F\, \hred)\, s
\logequ
\G\, (\neg\neg\F\, \hred)\, s
\]
Now the claim follows by taking contrapositions of the respective
assumptions, noticing
$\forall s.\, \neg \G\, \hblue\, s
\logequ \neg\neg \F\, \hred\, s$
and the above equivalence. 
\end{proof}

The corollary below follows from
lemmata~\ref{lemma:SN_implies_notnuU},
\ref{lemma:notnotFG_implies_SN}
and~\ref{lemma:notGFred_implies_notnotFGblue}.

\begin{corollary}\label{lemma:notnuU_implies_SN}
$(\forall s.\, \neg \rep{s} \imp \Reds{s} ~\mbox{is streamless})
\logequ (\forall s.\, \neg \G\, \hblue s \imp \F\, \hred\, s)$
\end{corollary}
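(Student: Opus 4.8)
The statement reads: the implication \textit{(f)} $\Rightarrow$ \textit{(d)} is equivalent to Markov's Principle. Recalling $\forall s.\, \rep{s}\logequ\G\, (\F\, \hred)\, s$, the hypothesis $\neg\rep{s}$ is variant \textit{(f)}, the conclusion that $\Reds{s}$ is streamless is variant \textit{(d)}, and $\neg \G\, (\neg \G\, \hblue)\, s$ is variant \textit{(e)}. The plan is to obtain both directions by composing the three-variant relations already in hand, the pivot being that \textit{(e)} sits between \textit{(f)} and \textit{(d)}: the step \textit{(e)}$\Rightarrow$\textit{(f)} is unconditional, whereas \textit{(e)}$\Rightarrow$\textit{(d)} and \textit{(f)}$\Rightarrow$\textit{(e)} each cost a weak form of classical logic.

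For the forward direction I assume \textit{(f)}$\Rightarrow$\textit{(d)} and aim to feed Lemma~\ref{lemma:notnotFG_implies_SN}, whose $\imp$ direction converts \textit{(e)}$\Rightarrow$\textit{(d)} into MP; so it suffices to produce \textit{(e)}$\Rightarrow$\textit{(d)}. I prepend the always-valid downward step \textit{(e)}$\Rightarrow$\textit{(f)}, namely Lemma~\ref{lemma:notnotFG_implies_notnuU} ($\neg \G\, (\neg \G\, \hblue)\, s \imp \neg\rep{s}$), to the assumed \textit{(f)}$\Rightarrow$\textit{(d)}. The composite is exactly \textit{(e)}$\Rightarrow$\textit{(d)}, and Lemma~\ref{lemma:notnotFG_implies_SN} then delivers MP.

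For the backward direction I assume MP and build \textit{(f)}$\Rightarrow$\textit{(d)} as \textit{(f)}$\Rightarrow$\textit{(e)}$\Rightarrow$\textit{(d)}. The link \textit{(e)}$\Rightarrow$\textit{(d)} is handed to me by the $\Leftarrow$ direction of Lemma~\ref{lemma:notnotFG_implies_SN}. For the link \textit{(f)}$\Rightarrow$\textit{(e)} I invoke the $\Leftarrow$ direction of Lemma~\ref{lemma:notGFred_implies_notnotFGblue}, which reduces it to the double-negation-shift instance $\forall s.\, \G\, (\neg\neg \F\, \hred)\, s \imp \neg\neg \G\, (\F\, \hred)\, s$; it then remains to derive this instance from MP. (Note that the opposite implication \textit{(d)}$\Rightarrow$\textit{(f)} is unconditional by Lemma~\ref{lemma:SN_implies_notnuU}, so together with the corollary this says \textit{(d)} and \textit{(f)} coincide precisely when MP holds.)

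The one non-bookkeeping step, and the place I expect to spend real effort, is extracting the above DNS instance from MP. Using the equivalence $\neg \G\, \hblue\, t \logequ \neg\neg \F\, \hred\, t$ already recorded in the proof of Lemma~\ref{lemma:notGFred_implies_notnotFGblue}, MP rewrites to $\neg\neg \F\, \hred\, t \imp \F\, \hred\, t$ for every $t$, i.e.\ it eliminates the inner double negation. Reading $\G\, X\, s$ as $\forall n.\, X\, \sub{s}{n}$, the antecedent $\G\, (\neg\neg \F\, \hred)\, s$ therefore strengthens position-by-position to $\G\, (\F\, \hred)\, s$, from which the doubly negated conclusion is immediate. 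Everything else is routine composition of implications together with the translation between $\rep{}$, variants \textit{(e)}, \textit{(d)}, and their $\G$/$\F$ spellings.
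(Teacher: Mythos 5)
Your proof is correct and takes essentially the same route as the paper, whose entire proof is the citation of Lemmata~\ref{lemma:SN_implies_notnuU}, \ref{lemma:notnotFG_implies_SN} and~\ref{lemma:notGFred_implies_notnotFGblue}; you merely make the composition explicit, including the step from MP to the $\Sigma^0_1$-DNS instance that the paper leaves implicit. The only (cosmetic, and arguably more accurate) deviation is that for the forward direction you route through the unconditional step (e)$\Rightarrow$(f), i.e.\ Lemma~\ref{lemma:notnotFG_implies_notnuU}, where the paper cites the unconditional (d)$\Rightarrow$(f) of Lemma~\ref{lemma:SN_implies_notnuU}.
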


The following claim is a corollary from
Corollary~\ref{coro:rep_af}
and Lemma~\ref{lemma:streamless_SN}.
\begin{corollary}\label{coro:nuU_implies_SN}
$(\forall s.\, \neg\rep{s} \imp \Reds{s} ~\mbox{is streamless})
\logequ
(\forall s.\, \neg\af{\succ_s}{0} \imp 
\forall \ell:\chain{{\succ_s}}{0}.\, \finite{\ell})$.
\end{corollary}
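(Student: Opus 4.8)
The plan is to observe that this corollary is purely a matter of rewriting the hypothesis and the conclusion of each implication using the two cited equivalences, so that no new (co)inductive argument is required; the content is entirely transported from Corollary~\ref{coro:rep_af} and Lemma~\ref{lemma:streamless_SN}.

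First I would invoke Corollary~\ref{coro:rep_af}, namely $\forall s.\, \rep{s} \logequ \af{\succ_s}{0}$. Negating both sides of this equivalence pointwise gives $\forall s.\, \neg\rep{s} \logequ \neg\af{\succ_s}{0}$, so the two \emph{antecedents} appearing in the statement are interchangeable for each fixed $s$. Next I would invoke Lemma~\ref{lemma:streamless_SN}, $\forall s.\, \mathrm{SN}\,\succ_s 0 \logequ \Reds{s}~\mbox{is streamless}$, and unfold the definition of strong normalization: $\mathrm{SN}\,\succ_s 0$ is by definition exactly $\forall \ell:\chain{\succ_s}{0}.\, \finite{\ell}$. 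Hence the \emph{succedent} ``$\Reds{s}$ is streamless'' coincides, up to this established equivalence, with ``$\forall \ell:\chain{\succ_s}{0}.\, \finite{\ell}$''.

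Combining these, for each $s$ the implication $\neg\rep{s} \imp \Reds{s}~\text{is streamless}$ is equivalent to the implication $\neg\af{\succ_s}{0} \imp \forall \ell:\chain{\succ_s}{0}.\, \finite{\ell}$, since each is obtained from the other by replacing antecedent and succedent with provably equivalent formulas. Taking universal closures over $s$ on both sides yields the desired biconditional, and both directions follow at once by this substitution of equivalents into an implication.

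I do not expect any real obstacle here beyond bookkeeping. The only point worth flagging is that the two cited facts are stated with the quantifier $\forall s$ on the outside, so I would transport each equivalence \emph{under} that quantifier (legitimate, as it is proved for every $s$) and then conclude, rather than attempting to interchange the quantifier over $s$ with the inner quantifier over chains $\ell$.
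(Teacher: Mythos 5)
Your proposal is correct and matches the paper exactly: the paper gives no separate argument, stating only that the claim is a corollary of Corollary~\ref{coro:rep_af} and Lemma~\ref{lemma:streamless_SN}, which is precisely the substitution-of-equivalents bookkeeping you carry out (including the intuitionistically valid step of negating both sides of an equivalence and unfolding $\mathrm{SN}\,\succ_s 0$ to $\forall \ell:\chain{\succ_s}{0}.\,\finite{\ell}$).
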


\section{Related work: finiteness of sets of red positions}
\label{sec:finiteness}

In \cite{ConstructivelyFinite},
Coquand and Spiwack introduce four notions of finiteness 
of sets in Bishop's set theory \cite{Bishop}. 
For understanding some of their arguments,
for example, on page 222, the 9-th line from below,
we had to assume that equality is decidable.
Under this assumption their results may be rendered as follows:
\begin{enumerate}[(i)]

\item Set $A$ is \emph{enumerated} if it is given by a list.

\item Set $A$ is of \emph{bounded size} if there exists a bound such that
any list over $A$ contains duplicates whenever its length exceeds the bound.

\item Set $A$ is \emph{Noetherian} if the root of the tree of duplicate-free lists 
over $A$ is \emph{accessible} (cf.~Section~\ref{sec:acc}).

\item Set $A$ is \emph{streamless} if every stream over $A$ has a duplicate.

\end{enumerate}
These four notions are classically equivalent but of decreasing constructive
strength. 
Their hierarchy of finiteness matches pleasantly
with our hierarchy of positive variations of 
streams being finitely red, 
if we look at sets of red positions in our streams. 
An important difference is that Coquand and Spiwack
consider sets that may not be decidable,
whereas we work with decidable sets of natural numbers.
As a result, our hierarchy
becomes tighter than theirs,
allowing us to capture differences in
strength of our hierarchy in terms of weak instances
of the Law of Excluded Middle.

In this section, we rephrase our hierarchy
in terms of Coquand and Spiwack's.
Their streamless sets directly correspond to
our streams $s$ for which the set of red positions, $\Reds{s}$, is 
streamless. We will therefore only consider (i) -- (iii).
Recall that in this paper we work with decidable sets of natural numbers.

\medskip
\newcommand{\mylist}{\mathit{list}}
\newcommand{\sas}{\mathit{sas}}
\newcommand{\enum}[1]{\mathit{enum}~#1}
\newcommand{\Atmost}[2]{\mathit{bounded}_{#1}~#2}
\newcommand{\del}[2]{#1\backslash #2}
\newcommand{\nodup}[1]{\mathit{nodup}~#1}

\paragraph{\bf Enumerated sets}
A set $A$ 
is enumerated, $\enum{A}$, if all its elements
can be listed, or
\[
\infer{\enum{A}}{
  \forall x : A.\ \mathit{false}
}
\qquad
\infer{\enum{A}}{
  x : A
  &
  \enum{(A \setminus \{x\})}
}
\]
Note that a proof of $\enum{A}$ is essentially an exhaustive
duplicate-free list of elements of $A$.

It is easy to see that a stream $s$ 
is eventually all blue if and only if
the set of red positions in $s$ is enumerated. 

\begin{lemma} 
$\forall s.\, \F\, (\G\, \hblue)\, s \logequ \enum{\Reds{s}}$.
\end{lemma}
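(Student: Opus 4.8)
The plan is to reduce both sides to a single statement about the decidable subset $A = \Reds{s}$ of $\nat$ and then prove a general combinatorial equivalence for such subsets. First I would rewrite the left-hand side using the equivalence already established in Section~\ref{sec:stream_function}, namely $\F\, (\G\, \hblue)\, s \logequ \exists n.\, \forall m \geq n.\, \stof\, s\, m = B$. Since $\stof\, s\, m = B$ is, by the decidability of color equality, the same as $m \notin \Reds{s}$, the left-hand side says precisely that $A = \Reds{s}$ is \emph{bounded}, i.e.\ $\exists N.\, \forall m \geq N.\, m \notin A$. Thus it suffices to prove, for an arbitrary decidable $A \subseteq \nat$,
\[
(\exists N.\, \forall m \geq N.\, m \notin A) \logequ \enum{A}.
\]

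For the direction $\enum{A} \imp (\exists N.\, \forall m \geq N.\, m \notin A)$, I would induct on the proof of $\enum{A}$. In the base case the premise $\forall x : A.\ \mathit{false}$ says $A$ is empty, so $N = 0$ works. In the step case we have some $x : A$ and $\enum{(A \setminus \{x\})}$; the induction hypothesis yields a bound $N'$ for $A \setminus \{x\}$, and taking $N = \max(N', x + 1)$ bounds $A$, since any $m \geq N$ is neither equal to $x$ nor an element of $A \setminus \{x\}$.

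For the converse, I would fix a bound $N$ and induct on $N$. For $N = 0$ every $m \geq 0$ is outside $A$, so $A$ is empty and $\enum{A}$ holds by the first rule. For $N + 1$, decidability of $A$ lets me test whether $N \in A$. If $N \notin A$ then $N$ is also a bound, and the induction hypothesis gives $\enum{A}$ directly. If $N \in A$ then $A \setminus \{N\}$ is bounded by $N$, so the induction hypothesis gives $\enum{(A \setminus \{N\})}$, and the second rule with $x = N$ yields $\enum{A}$.

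Instantiating this equivalence at $A = \Reds{s}$ and combining with the reduction of the first paragraph closes both directions of the lemma. The main obstacle is getting the inductions to go through at the right level of generality: the step case of $\enum{A} \imp$ bounded forces one to reason about the abstract set $A \setminus \{x\}$ rather than the red set of a concrete stream, which is exactly why I abstract away from streams and prove the equivalence for all decidable $A \subseteq \nat$. The only place decidability is genuinely used is the case split on $N \in A$ in the converse direction.
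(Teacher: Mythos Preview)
Your proof is correct and follows essentially the same idea as the paper's, which is only a two-line sketch: from $\F\,(\G\,\hblue)\,s$ one scans the finite initial segment to list the red positions, and from $\enum{\Reds{s}}$ one extracts the maximum red position as the bound. Your version makes the underlying inductions explicit and abstracts them to an arbitrary decidable $A\subseteq\nat$, which is a clean way to organize the argument; just make sure that in your induction on $N$ the statement is quantified over all decidable $A$, since in the $N\in A$ sub-case you invoke the induction hypothesis at the different set $A\setminus\{N\}$.
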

\begin{proof}
($\imp$): Given $\F\, (\G\, \hblue)\, s$, we can construct
a list of the red positions in $s$, from which 
$\enum{\Reds{s}}$ follows. 

($\Leftarrow$): 
Given $\enum{\Reds{s}}$, 
we know the position of the last occurrence of red in $s$, 
which yields $\F\, (\G\, \hblue)\, s$.
\end{proof}

\paragraph{\bf Size-bounded sets}

A set $A$ 
is of bounded size
if there exists a natural number $n$ such that
any duplicate-free list over $A$ is of length less than $n$. 
Specifically, we say $A$ 
is size-bounded by $n$
if any duplicate-free list over $A$ is of length of less than $n$.
Formally, 
\[
\infer{\Atmost{n+1}{A}}{
  \forall x : A.\, 
  \Atmost{n}{(A \setminus \{x\})}
}
\]

\begin{lemma}
$\forall n, s.\, \atmost{n}{s} \logequ \Atmost{n}{\Reds{s}}$.
\end{lemma}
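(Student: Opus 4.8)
The plan is to prove the two implications separately, exploiting the fact that both sides express ``$s$ has fewer than $n$ red positions'': $\atmost{n}{s}$ reads this off coinductively by walking along $s$ and decrementing the index at each red, while $\Atmost{n}{\Reds{s}}$ reads it off inductively by removing one red position at a time and decrementing $n$. I would prove the forward direction by induction on $n$ and the backward direction by coinduction on $\atmost{n}{s}$.

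For ($\imp$), induction on $n$. The base $n = 0$ is vacuous, since $\atmost{0}{s}$ has no introduction rule and is everywhere false. For $n = n'+1$, assuming $\atmost{n'+1}{s}$, I must establish $\Atmost{n'+1}{\Reds{s}}$, which by its sole rule means proving $\Atmost{n'}{(\Reds{s}\setminus\set{x})}$ for every red position $x \in \Reds{s}$. The trick is to realize $\Reds{s}\setminus\set{x}$ as the set of red positions of an actual stream: let $\mathit{fl}_x\, s$ be $s$ with the colour at position $x$ recolored blue, defined by recursion on $x$ with $\mathit{fl}_0\,(c~s) = B~s$ and $\mathit{fl}_{x+1}\,(c~s) = c~(\mathit{fl}_x\,s)$. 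Then $\Reds{\mathit{fl}_x\, s} = \Reds{s}\setminus\set{x}$ holds definitionally, so by the induction hypothesis it suffices to show $\atmost{n'}{\mathit{fl}_x\, s}$. This is delivered by an auxiliary ``flip lemma'', $\forall x, n, s.\, \at{s}{x} = R \imp \atmost{n+1}{s} \imp \atmost{n}{\mathit{fl}_x\, s}$, proved by induction on $x$ with $n$ universally quantified (so the hypothesis applies at any index): the base case inverts the red clause of $\atmost{n+1}{(R~s)}$, and the step case peels the head, applies the inner hypothesis, and re-applies the appropriate $\atmost$ rule, the index dropping by one exactly when the peeled head is red.

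For ($\Leftarrow$), I would prove $\atmost{n}{s}$ by coinduction, carrying $\Atmost{n}{\Reds{s}}$ as the coinductive invariant and doing case analysis on the head of $s$. If $s = B~t$, then $\Reds{s}$ is the upward shift of $\Reds{t}$, so $\Atmost{n}{\Reds{t}}$; the coinduction hypothesis yields $\atmost{n}{t}$ and hence $\atmost{n}{(B~t)}$. If $s = R~t$, then $\Atmost{n}{\Reds{s}}$ forces $n = n'+1$, and inverting its rule at the red position $0 \in \Reds{s}$ gives $\Atmost{n'}{(\Reds{s}\setminus\set{0})}$; since $\Reds{s}\setminus\set{0}$ is again the upward shift of $\Reds{t}$, the coinduction hypothesis produces $\atmost{n'}{t}$, whence $\atmost{n'+1}{(R~t)}$. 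Both cases rely on a small invariance fact, that $\mathit{bounded}_n$ is preserved under the position-shift bijection $i \mapsto i+1$ between $\Reds{t}$ and its shift inside $\Reds{c~t}$; I would establish this once and for all as bijection-invariance of $\mathit{bounded}_n$ on decidable subsets of $\nat$, by a routine induction on $n$.

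The main obstacle is exactly the structural mismatch between the two definitions: $\atmost{}{}$ can only remove the red sitting at the head of the stream, whereas $\Atmost{}{}$ removes an arbitrary red position. Reconciling them is what forces the two pieces of machinery: the flip operation (to turn ``remove the arbitrary red $x$'' into ``take the red positions of a genuine stream'') for the forward direction, and the shift-invariance of $\mathit{bounded}_n$ (to compensate for positions sliding down by one as the head is consumed) for the backward direction. Of these, establishing shift-invariance of $\mathit{bounded}_n$ is the technical crux, though it is conceptually clear, since $\mathit{bounded}_n$ depends only on the cardinality structure of the set and not on the identity of its elements.
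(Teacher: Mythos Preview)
Your proposal is correct and follows essentially the same approach as the paper: both directions hinge on the same two auxiliary facts you isolate---the ``flip'' fact that deleting one red position drops the $\atmostnoargs{}$ index by one (the paper writes $s_{\Reds{s}\setminus\{x\}}$ for your $\mathit{fl}_x\,s$ and simply asserts the deduction), and the shift-invariance of $\mathit{bounded}_n$, which the paper uses without comment. The only organizational difference is in $(\Leftarrow)$: the paper generalizes to arbitrary decidable $A$ and proves $\Atmost{n}{A}\imp\atmost{n}{s_A}$ by an outer induction on $n$ with an inner coinduction, whereas you run a single coinduction on the binary predicate $\atmostnoargs{}$ with invariant $\Atmost{n}{\Reds{s}}$; the content is the same.
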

\begin{proof}
For any decidable set $A$ of natural numbers, we define a stream $s_A$
by 
\[\begin{array}{ll}
\at{s_A}{k} = R  &\mathrm{if}~k \in A\\
\at{s_A}{k} = B  & \mathrm{otherwise}
\end{array}\]
so that $s_A$ is red exactly
at the positions in $A$. 

($\imp$): By induction on $n$.
The case of $n = 0$ is immediate.
The case of $n = n' + 1$: We are given as induction hypothesis
that, $\forall s.\, \atmost{n'}{s} \imp \Atmost{n'}{\Reds{s}}$.
We have to prove $\Atmost{n'+1}{\Reds{s}}$, given $\atmost{n'+1}{s}$.
Suppose $x \in \Reds{s}$. It suffices to prove
$\Atmost{n'}{\Reds{s} \setminus \{x\}}$.
From $\atmost{n'+1}{s}$, we deduce $\atmost{n'}{s_{\Reds{s} \setminus \{x\}}}$.
By induction hypothesis, we obtain 
$\Atmost{n'}{\Reds{s} \setminus \{x\}}$, 
as required.

($\Leftarrow$): 
We prove 
$\forall n, A.\, \Atmost{n}{A} \imp \atmost{n}{s_A}$
by induction on $n$, from which the case follows. 
The case of $n = 0$ is immediate.
The case of $n = n'+1$: We are given
as induction hypothesis that,
$\forall A.\, \Atmost{n'}{A} \imp \atmost{n'}{s_A}$.
We have to prove $\forall A.\, \Atmost{n'+1}{A}
\imp \atmost{n'+1}{s_A}$.  We do so by coinduction and case analysis
on the head color of $s_A$. 
The case of $\at{s_A}{0} = B$:
We have $\Atmost{n'+1}{\Reds{\sub{s_A}{1}}}$.
We close the case by coinduction hypothesis. 
The case of $\at{s_A}{0} = R$:
We have $\Atmost{n'}{\Reds{\sub{s_A}{1}}}$.
We close the case by the main induction hypothesis. 
\end{proof}

\newcommand{\Noe}[1]{\mathit{Noet}~#1}
\newcommand{\Noex}[1]{\mathit{Noet'}~#1}

\paragraph{\bf Noetherian sets}
A set $A$ is Noetherian, $\Noe{A}$,
if, for all $x \in A$, $\del{A}{\{x\}}$ is Noetherian. 
Formally, 
\[
\infer{ \Noe{A}}{ \forall x \in A.\, \Noe{(\del{A}{\{x\}})}}
\]

Then, a stream $s$ is almost always blue, $\pre{s}$, if and only if 
the set of red positions in $s$ is Noetherian. 
To prove this, it is convenient
to reformulate Noetherianness for sets of natural numbers by removing
the elements up to $n$ (including $n$): 
\[
\infer{ \Noex{A}}{ \forall n \in A.\, \Noex{\del{A}{\{0,\ldots, n\}}}}
\]

The two definitions are equivalent.
\begin{lemma}\label{lemma:Noetherian_another}
$\forall A.\, \Noe{A} \logequ \Noex{A}$.
\end{lemma}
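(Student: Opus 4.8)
The plan is to prove the two implications separately; the forward direction $\Noe{A} \imp \Noex{A}$ is routine, while the converse $\Noex{A} \imp \Noe{A}$ carries the real content. Before either, I would record a subset-monotonicity lemma for $\Noex{}$, namely $\forall C.\, \Noex{C} \imp \forall B.\, B \subseteq C \imp \Noex{B}$, proved by a direct induction on the derivation of $\Noex{C}$: given $n \in B \subseteq C$ one has $n \in C$ and $B \setminus \{0,\ldots,n\} \subseteq C \setminus \{0,\ldots,n\}$, so the induction hypothesis applies to the subderivation at $n$. Throughout I rely on the fact that $A$ is a decidable set of natural numbers.

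For $\Noe{A} \imp \Noex{A}$ I would induct on the derivation of $\Noe{A}$. The premise gives $\Noe{(A \setminus \{x\})}$ for every $x \in A$, whence by the induction hypothesis $\Noex{(A \setminus \{x\})}$ for every $x \in A$. To conclude $\Noex{A}$ I must exhibit $\Noex{(A \setminus \{0,\ldots,n\})}$ for each $n \in A$; instantiating the above at $x = n$ gives $\Noex{(A \setminus \{n\})}$, and since $A \setminus \{0,\ldots,n\} \subseteq A \setminus \{n\}$ the monotonicity lemma closes the case. Intuitively, deleting the initial segment up to $n$ is just a finite sequence of single-element deletions, so $\Noe{}$ is the more fine-grained, and hence stronger, well-foundedness notion.

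The converse is harder precisely because a single $\Noex{}$-step deletes $n$ together with \emph{all} smaller elements of $A$, whereas a $\Noe{}$-step deletes a single element and must retain those smaller ones. To repair the mismatch I would carry the prematurely dropped small elements in an auxiliary finite set and prove the strengthened statement: for every $A$ with $\Noex{A}$ and every finite $G$ with $G \cap A = \emptyset$, one has $\Noe{(A \cup G)}$; taking $G = \emptyset$ then yields $\Noe{A}$. I would prove this by a lexicographic induction, outer on the derivation of $\Noex{A}$ and inner on the cardinality of $G$. To establish $\Noe{(A \cup G)}$ I pick $x \in A \cup G$ and, by decidability, split into two cases. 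If $x \in G$, then $(A \cup G) \setminus \{x\} = A \cup (G \setminus \{x\})$ with strictly smaller $G$, closed by the inner hypothesis. If $x \in A$, I use the disjoint decomposition $A \setminus \{x\} = (A \setminus \{0,\ldots,x\}) \cup \{a \in A : a < x\}$, so that $(A \cup G) \setminus \{x\} = (A \setminus \{0,\ldots,x\}) \cup G'$, where $G' = \{a \in A : a < x\} \cup G$ is again finite and disjoint from $A \setminus \{0,\ldots,x\}$. Since $x \in A$ makes $\Noex{(A \setminus \{0,\ldots,x\})}$ an immediate subderivation, the outer induction hypothesis applies to it with the finite set $G'$ and closes this case.

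I expect this converse to be the main obstacle, the crux being the choice of generalization that makes the induction go through. Removing a non-minimal element re-exposes small elements that an initial-segment step would have erased, so a naive induction on $\Noex{A}$ only delivers $\Noe{}$ for the over-pruned sets $A \setminus \{0,\ldots,x\}$. Threading the finite add-back set $G$ — decreased in the $x \in G$ case and reset to a possibly larger $G'$ but under a strictly smaller $\Noex{}$-derivation in the $x \in A$ case — is exactly what keeps the lexicographic induction well-founded. The points needing care are verifying that $G'$ remains finite and disjoint from $A \setminus \{0,\ldots,x\}$, and that the case split on $x \in A \cup G$ is genuinely decidable.
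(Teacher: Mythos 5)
Your proof is correct, but both directions are organized differently from the paper's. For the forward direction the paper strengthens the statement to $\forall A.\, \Noe{A} \imp \forall n \in A.\, \Noex{\del{A}{\{0,\ldots,n\}}}$ and inducts on the proof of $\Noe{A}$, finishing with a case analysis on whether some $m<n$ lies in $A$; you instead prove once that $\Noex{}$ is downward closed under subsets and reduce $\del{A}{\{0,\ldots,n\}}$ to $\del{A}{\{n\}}$ --- arguably the cleaner factoring. For the converse, both proofs hinge on the same observation, namely that a $\Noex{}$-step over-prunes $A$ by the finite set $\{a \in A : a < n\}$, which must be restored; but the paper packages this as the auxiliary lemma $\forall n, A.\, \Noe{A} \imp \Noe{(A \cup \{n\})}$ (itself a routine induction on $\Noe{A}$), applied finitely many times to the induction hypothesis $\Noe{\del{A}{\{0,\ldots,n\}}}$, whereas you thread an explicit finite ``add-back'' set $G$ through a strengthened statement and run a lexicographic induction (first on the derivation of $\Noex{A}$, then on $|G|$). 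The paper's route keeps the main induction simple at the cost of a separate lemma; yours avoids the lemma but pays with the bookkeeping on $G$ (finiteness, disjointness from the pruned set, and well-foundedness of the lexicographic measure), all of which you correctly identify and discharge. In either version, decidability of $A$ is what makes the residue $\{a \in A : a < n\}$ available as a concrete finite set and what justifies the case split on $x \in A$ versus $x \in G$.
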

\begin{proof}
($\imp$): We prove that, $\forall A.\, \Noe{A} \imp \forall n \in A.\, 
\Noex{\del{A}{\{0,\ldots, n\}}}$ by induction on the proof of
$\Noe{A}$. We are given as induction hypothesis that, $\forall n \in A.\, 
\forall m \in \del{A}{\{n\}}.\, \Noex{\del{\del{A}{\{n\}}}{\{0,\ldots, m\}}}$.
We have to prove that, 
$\forall n\in A.\, \Noex{\del{A}{\{0,\ldots, n\}}}$,
which follows from the induction hypothesis and by
case analysis on whether there is $m < n$ such that $m \in A$.
(Recall that $A$ is assumed decidable.)

($\Leftarrow$): We prove by induction on the proof of $\Noex{A}$.
We are given as induction hypothesis that,
$\forall n \in A.\, \Noe{\del{A}{\{0,\ldots, n\}}}$.
We have to prove 
$\forall n \in A.\, \Noe{\del{A}{\{n\}}}$,
which follows from an auxiliary lemma, 
$\forall n,\, A.\, \Noe{A} \imp \Noe{(A \cup \{n\})}$,
proved by induction. 
\end{proof}

Given a set $A$ of natural numbers,
we define a relation $\succ_A$ on natural numbers such that
$n \succ_A m$ if $m = \ell + 1$ with
$\ell$ being the least natural number such that
$n \leq \ell$ 
and $\ell \in A$. Formally,
\[
\infer{n \succ_A m}{
  n \leq \ell
  &
  \forall k.\, n \leq k < \ell \imp k \not\in A
  &
  \ell \in A
  &
  \ell + 1 = m
}
\]
Note that, for any stream $s$, $\succ_s$ is equivalent 
to $\succ_{\Reds{s}}$ by definition.
So our task is to prove equivalence of
$A$ being Noetherian and accessibility of $0$ with respect to
$\succ_A$.

For a relation $\succ$ over a set $A$,
${\succ^*}$ denotes the reflexive and transitive closure of $\succ$
and $\succ^+$ denotes the transitive closure.

\begin{lemma}\label{lemma:acc_trans}
$\forall {\succ}.\, (\forall n.\, \wf{\succ}{n}) \logequ
(\forall n.\, \wf{{\succ}^+}{n})$.
\end{lemma}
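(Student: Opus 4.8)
The plan is to prove the stronger pointwise equivalence $\forall n.\, \wf{\succ}{n} \logequ \wf{{\succ}^+}{n}$, from which the stated lemma follows at once by quantifying over $n$ on each side.

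The $(\Leftarrow)$ direction, $\wf{{\succ}^+}{n} \imp \wf{\succ}{n}$, is the easy one and I would prove it by induction on the proof of $\wf{{\succ}^+}{n}$. The induction hypothesis says that every $\succ^+$-successor of $n$ is accessible under $\succ$; to conclude $\wf{\succ}{n}$ it suffices to check that every immediate $\succ$-successor $m$ of $n$ is accessible under $\succ$, and since a single $\succ$-step is in particular a $\succ^+$-step, $n \succ m$ gives $n \succ^+ m$ and the hypothesis applies directly. This is nothing but monotonicity of accessibility along the inclusion $\succ \subseteq {\succ}^+$.

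The interesting direction is $(\imp)$, namely $\wf{\succ}{n} \imp \wf{{\succ}^+}{n}$, which I would prove by induction on the proof of $\wf{\succ}{n}$. The induction hypothesis then provides, for every immediate $\succ$-successor $m$ of $n$, that $\wf{{\succ}^+}{m}$ holds. To establish $\wf{{\succ}^+}{n}$ I take an arbitrary $k$ with $n \succ^+ k$ and must produce $\wf{{\succ}^+}{k}$. Reading $\succ^+$ in its left-recursive form, I split $n \succ^+ k$ into a first step $n \succ m$ followed by $m \succ^* k$, so that either $m = k$ or $m \succ^+ k$. The induction hypothesis gives $\wf{{\succ}^+}{m}$; if $m = k$ this is already the goal, and if $m \succ^+ k$ then inverting $\wf{{\succ}^+}{m}$ at the step $m \succ^+ k$ yields $\wf{{\succ}^+}{k}$.

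The hard part will be the bookkeeping of the transitive closure in this last direction: the whole argument rests on decomposing a $\succ^+$-path so that its \emph{first} step is a single $\succ$-step, which is exactly what the accessibility induction on $\succ$ consumes, while the remaining tail is handled by the inversion principle for accessibility under $\succ^+$. Committing at the outset to the left-recursive presentation of $\succ^+$ makes this decomposition immediate; had I instead used the right-recursive one I would need an auxiliary lemma to peel off the first step, so the real work is in choosing the convenient inductive characterization of the transitive closure.
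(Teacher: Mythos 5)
Your proof is correct, and both directions are set up the same way as in the paper: the easy direction by induction on the proof of $\wf{{\succ}^+}{n}$ using ${\succ} \subseteq {\succ}^+$, and the hard direction by induction on the proof of $\wf{\succ}{n}$, with the crux being the left-recursive decomposition of $n \succ^+ k$ into a first single step $n \succ m$ followed by $m \succ^* k$. The one genuine difference is how you discharge the tail of that decomposition. The paper strengthens the statement before inducting, proving $\forall n.\, \wf{\succ}{n} \imp \forall m.\, n \succ^* m \imp \wf{{\succ}^+}{m}$, so that the quantifier over $\succ^*$-reachable points absorbs the tail and the induction hypothesis applies directly. You instead prove the unstrengthened pointwise implication and handle the tail by inverting $\wf{{\succ}^+}{m}$ at the step $m \succ^+ k$, i.e., by using that accessibility is downward closed along the relation (immediate from its single introduction rule). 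Both devices are sound and the proofs are of essentially the same length; the paper's strengthening keeps the induction self-contained at the cost of a more elaborate statement, while your version keeps the statement minimal at the cost of an explicit appeal to inversion. Your closing remark is also apt: with the right-recursive presentation of $\succ^+$ one needs an auxiliary lemma to peel off the first step, and the paper silently makes the same choice by phrasing its strengthened statement in terms of $\succ^*$.
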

\begin{proof}
($\imp$):
We prove a slightly stronger statement,
$\forall n.\, \wf{\succ}{n} \imp \forall m.\, n \succ^* m
\imp \wf{{\succ}^+}{m}$ by induction on the proof of $\wf{\succ}{n}$,
from which the claim follows. 

($\Leftarrow$): By induction on the proof of $\wf{\succ^+}{n}$.
\end{proof}

\begin{lemma}\label{lemma:Noet_logequ_Acc}
$\forall A.\, \Noex{A} \logequ \wf{\succ_A}{0}$.
\end{lemma}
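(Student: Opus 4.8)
The plan is to prove a parametrized refinement of the equivalence and then specialize. For a decidable set $A$ and $n:\nat$, write $A_{\geq n}$ for $\del{A}{\{0,\ldots,n-1\}}$, i.e.\ the elements of $A$ that are at least $n$ (so $A_{\geq 0} = A$). I will establish
\[
\forall A, n.\quad \Noex{A_{\geq n}} \logequ \wf{\succ_A}{n},
\]
of which the stated lemma is the instance $n = 0$. The guiding observation is that $\succ_A$ is deterministic: $n \succ_A m$ holds exactly when $A_{\geq n}$ is inhabited, $\ell = \min A_{\geq n}$, and $m = \ell + 1$; decidability of $A$ lets us compute $\ell$ by bounded search from any witness and check the side conditions of $\succ_A$. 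Thus the $\succ_A$-steps out of $n$ enumerate $A_{\geq n}$ in increasing order, one element at a time, whereas the premise of $\Noex{A_{\geq n}}$ ranges over \emph{every} $m' \in A_{\geq n}$, recursing on $\del{(A_{\geq n})}{\{0,\ldots,m'\}} = A_{\geq m'+1}$. Reconciling this single-step/arbitrary-jump mismatch is the heart of the proof.

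For the direction $\Noex{A_{\geq n}} \imp \wf{\succ_A}{n}$ I would argue by induction on the derivation of $\Noex{A_{\geq n}}$, generalized over $A$ and $n$ (as in the footnote to Lemma~\ref{lemma:pre_acc}) so that the set being inducted on is abstract: prove $\forall B.\, \Noex{B} \imp \forall A, n.\, B = A_{\geq n} \imp \wf{\succ_A}{n}$. To show $\wf{\succ_A}{n}$ I take any $m$ with $n \succ_A m$; then $m = \ell+1$ with $\ell = \min A_{\geq n} = \min B \in B$, and instantiating the premise and the induction hypothesis at $\ell$, together with the identity $\del{B}{\{0,\ldots,\ell\}} = A_{\geq \ell+1}$, yields $\wf{\succ_A}{\ell+1}$, i.e.\ $\wf{\succ_A}{m}$. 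This direction is routine because $\Noex$ supplies information at every element, in particular at the least one that $\succ_A$ singles out.

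The converse $\wf{\succ_A}{n} \imp \Noex{A_{\geq n}}$ is the main obstacle, precisely because accessibility only tells us about the least-element successor $\ell+1$, while $\Noex{A_{\geq n}}$ demands $\Noex{A_{\geq m'+1}}$ for \emph{all} $m' \in A_{\geq n}$. To absorb this gap I would strengthen the conclusion to a monotone form and prove, by induction on the derivation of $\wf{\succ_A}{n}$ (with $A$ fixed),
\[
\forall A, n.\quad \wf{\succ_A}{n} \imp \forall j.\, j \geq n \imp \Noex{A_{\geq j}}.
\]
Given $j \geq n$ and any $m' \in A_{\geq j}$, the set $A_{\geq n}$ is inhabited (by $m'$), so $\ell = \min A_{\geq n}$ exists and $n \succ_A \ell+1$; the induction hypothesis then gives $\forall j'.\, j' \geq \ell+1 \imp \Noex{A_{\geq j'}}$, and since $m' \geq \ell$ we have $m'+1 \geq \ell+1$, whence $\Noex{A_{\geq m'+1}} = \Noex{\del{(A_{\geq j})}{\{0,\ldots,m'\}}}$. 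As $m'$ was arbitrary this yields $\Noex{A_{\geq j}}$, and the case $n = j = 0$ is the claim. The "$\forall j \geq n$" quantifier is exactly a built-in monotonicity of $\Noex$ under enlarging the threshold (equivalently, under passing to subsets), and it is what makes the single-step induction deliver the arbitrary-jump conclusion; an alternative would be to route through the transitive closure $\succ_A^+$ and invoke Lemma~\ref{lemma:acc_trans}, but the strengthened conclusion is the more economical route.
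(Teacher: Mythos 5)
Your proof is correct, and while the forward direction matches the paper's in substance (the paper proves $\Noex{A} \imp \wf{\succ_A}{0}$ by induction on $\Noex{A}$ and transfers via the observation $\wf{\succ_{\del{A}{\{0,\ldots,n-1\}}}}{0} \imp \wf{\succ_A}{n}$, which is just a repackaging of your generalization over $n$), your converse takes a genuinely different route. The paper resolves the single-step/arbitrary-jump mismatch by passing to the transitive closure: it proves $\wf{\succ_A^+}{n} \imp \Noex{\del{A}{\{0,\ldots,n-1\}}}$ by induction on $\wf{\succ_A^+}{n}$ and then invokes Lemma~\ref{lemma:acc_trans} (more precisely, the pointwise strengthening $\wf{\succ}{n} \imp \wf{\succ^+}{n}$ established inside its proof) to get back to $\succ_A$; the price is an auxiliary argument, by case analysis on $n \in A$, that $n \succ_A^+ (m'+1)$ for every $m' \in A_{\geq n}$. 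You instead keep the single-step relation and strengthen the induction conclusion to $\forall j \geq n.\, \Noex{A_{\geq j}}$, letting the monotonicity of $\Noex$ under raising the threshold absorb the jump from $\ell+1 = \min A_{\geq n} + 1$ to an arbitrary $m'+1$. Both arguments use decidability of $A$ in the same place (computing $\min A_{\geq n}$ by bounded search from an inhabitant, so as to realize a $\succ_A$-step). Your version avoids the transitive-closure machinery entirely and is self-contained; the paper's version factors the same work through Lemma~\ref{lemma:acc_trans}, which it needs anyway and which makes the closure argument reusable.
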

\begin{proof}
($\imp$): By induction on the proof of $\Noex{A}$.
We are given as induction hypothesis that, 
$\forall n \in A.\, \wf{\succ_{\del{A}{\{0,\ldots, n\}}}}{0}$.
We have to prove $\forall n.\,
0 \succ_A n \imp \wf{\succ_A}{n}$, which follows
from the induction hypothesis and 
by observing that, $\forall A,\, n.\, 
\wf{\succ_{\del{A}{\{0,\ldots, n-1\}}}}{0}
\imp \wf{\succ_A}{n}$. 

($\Leftarrow$): We prove that, 
$\forall A,\, n.\, \wf{\succ_A^+}{n} \imp \Noex{\del{A}{\{0,\ldots, n-1\}}}$
by induction on the proof of $\wf{\succ_A^+}{n}$.
Then the case follows from Lemma~\ref{lemma:acc_trans}.
We are given as induction hypothesis that,
$\forall m.\, n \succ_A^+ m \imp \Noex{\del{A}{\{0,\ldots, m-1\}}}$.
We have to prove $\Noex{\del{A}{\{0,\ldots, n-1\}}}$,
which follows from the induction hypothesis
and by case analysis on whether $n \in A$ or not.
\end{proof}

Combining lemmata~\ref{lemma:pre_acc}, \ref{lemma:Noetherian_another}
and~\ref{lemma:Noet_logequ_Acc}, we obtain: 

\begin{corollary}
$\forall s.\, \pre{s} \logequ \Noe{\Reds{s}}$.
\end{corollary}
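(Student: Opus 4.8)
The plan is to chain together three equivalences already established in the excerpt, using the observation that the stream-induced relation $\succ_s$ and the set-induced relation $\succ_{\Reds{s}}$ coincide by definition. First I would invoke Corollary~\ref{coro:pre_acc} to rewrite the left-hand side: $\pre{s} \logequ \wf{\succ_s}{0}$. Since $\succ_s$ is literally $\succ_{\Reds{s}}$ (the relation $\succ_A$ instantiated at $A = \Reds{s}$), this is the same as $\wf{\succ_{\Reds{s}}}{0}$. Next, Lemma~\ref{lemma:Noet_logequ_Acc} applied to the decidable set $A = \Reds{s}$ gives $\Noex{\Reds{s}} \logequ \wf{\succ_{\Reds{s}}}{0}$, bridging accessibility to the auxiliary Noetherian predicate $\Noex{}$. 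Finally, Lemma~\ref{lemma:Noetherian_another} at $A = \Reds{s}$ identifies $\Noex{\Reds{s}}$ with the genuine predicate $\Noe{\Reds{s}}$.

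Composing these, I obtain $\pre{s} \logequ \wf{\succ_{\Reds{s}}}{0} \logequ \Noex{\Reds{s}} \logequ \Noe{\Reds{s}}$, which is exactly the claim. The argument is thus purely a matter of transitively composing biconditionals, so there is no substantial obstacle beyond bookkeeping; the corollary is genuinely a corollary rather than a fresh inductive or coinductive argument.

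The one point deserving care is the silent identification of $\succ_s$ with $\succ_{\Reds{s}}$: this is the hinge that lets the stream-side result (Corollary~\ref{coro:pre_acc}) meet the set-side results (Lemmata~\ref{lemma:Noetherian_another} and~\ref{lemma:Noet_logequ_Acc}). It holds because $n \succ_s m$ and $n \succ_{\Reds{s}} m$ unfold to the very same condition, namely $m = \ell+1$ where $\ell$ is the least position with $n \leq \ell$ and $\at{s}{\ell} = R$, equivalently $\ell \in \Reds{s}$. Hence no separate induction is required for this step. I would also flag that decidability of $\Reds{s}$ is precisely what legitimizes the appeal to Lemma~\ref{lemma:Noetherian_another}, which was proved under exactly that hypothesis.
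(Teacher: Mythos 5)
Your proof is correct and follows exactly the paper's route: the paper likewise obtains the corollary by combining Lemma~\ref{lemma:pre_acc} (equivalently Corollary~\ref{coro:pre_acc}), Lemma~\ref{lemma:Noetherian_another}, and Lemma~\ref{lemma:Noet_logequ_Acc}, using the same definitional identification of $\succ_s$ with $\succ_{\Reds{s}}$ that the paper states just before introducing $\succ_A$. No gaps.
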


\section{Conclusion}\label{sec:conclusion}

The following diagram summarizes our current understanding
of the constructive interrelations between the various notions
of finiteness.
Implications that are annotated have not been proved constructively;
the annotations explain which principle is sufficient and, in
some cases, necessary to prove the implication.
\[
\xymatrix{
\F\, (\G\, \hblue)\, s 
 \ar@/_1pc/@{=>}[d] \\
\exists n.\, \atmost{n}{s} 
 \ar@/_1pc/@{=>}[d]    
 \ar@/_1pc/@{=>}[u]_{~\logequ~ \mathrm{LPO}} \\
\pre{s}
 \ar@/_1pc/@{=>}[d]  
 \ar@/_1pc/@{=>}[u]_{~\logequ~ \mathrm{LPO}} 
 \ar@/_4pc/@{=>}[dd]\\
\SN\, \succ_s\, 0 
 \ar@/_4pc/@{=>}[dd]
 \ar@/_1pc/@{=>}[u]_{~\Leftarrow~ \mathrm{BI}~\vee~\mathrm{LPO}}
 \\
\neg \, \G\, (\neg\,  \G \hblue)\, s  
 \ar@/_1pc/@{=>}[d]  
 \ar@/_1pc/@{=>}[u]_{~\logequ~ \mathrm{MP}}\\
\neg \G\, (\F\, \hred)\, s
 \ar@/_1pc/@{=>}[u]_{~\Leftarrow~ \Sigma^0_1\mathrm{-DNS}}
}
\]
We do not know whether the implication $\SN\, \succ_s\, 0 \imp
\neg\G\, (\neg\G\, \hblue)\, s$ holds. 
As observed by Coquand and Spiwack
in \cite[Sec.~2.4, p.~225]{ConstructivelyFinite}, the implication 
`if $A$ is streamless, then $A$ is Noetherian' can be proved by
Bar Induction (BI). The precise instance of Bar Induction that
proves this implication depends on the set $A$. In our case here,
the set $A$ is decidable. Therefore our formulation of the implication,
$\SN\, \succ_s\, 0 \imp\pre{s}$, can be proved
by a weak instance of Bar Induction (BI${}_D$, 
see \cite[Ch.~4, 8.11, p.\~229]{TroelstraDalenvanII}). 
We find it remarkable that both BI and LPO prove $\SN\, \succ_s\, 0 \imp\pre{s}$,
and do not know whether this implication can be proved constructively.
Although we consider the latter unlikely, we prefer to consider it
as an open problem.
Since $\Reds{s}$ is decidable, constructivity of $\SN\, \succ_s\, 0 \imp \pre{s}$ 
will be more difficult to disprove than the conjecture by 
Coquand and Spiwack \cite[Sec.~2.4, p.~225]{ConstructivelyFinite}.

\paragraph{\bf Acknowledgements}

We would like to thank Thierry Coquand, Arnaud Spiwack and Nils Anders
Danielsson for a fruitful discussion in a late stage of the
preparation of this paper.

K.~Nakata and T.~Uustalu's research was supported by the European
Regional Development Fund (ERDF) through the Estonian Centre of
Excellence in Computer Science (EXCS). M.~Bezem's visit to Estonia in
February 2011 was supported by the same project.

\end{document}